\documentclass[aoas]{imsart}

\RequirePackage{amsthm,amsmath,amsfonts,amssymb}
\RequirePackage[authoryear]{natbib}
\RequirePackage[colorlinks,citecolor=blue,urlcolor=blue]{hyperref}
\RequirePackage{graphicx}
\usepackage{tikz}
\startlocaldefs
\theoremstyle{plain}

\newtheorem{theorem}{Theorem}[section]

\newtheorem{corollary}[theorem]{Corollary}
\theoremstyle{definition}


\endlocaldefs

\begin{document}

\begin{frontmatter}
\title{Inhomogeneous continuous-time Markov chains to infer flexible time-varying evolutionary rates}
\runtitle{Bayesian phylogenetic inference under time-varying evolutionary rates}

\thankstext{corrauth}{\emph{Corresponding author.}  Email: \href{mailto:msuchard@ucla.edu}{msuchard@ucla.edu}}

\begin{aug}
\author[A]{\fnms{Pratyusa}~\snm{Datta} \ead[label=e1]{pratyusa@ucla.edu}\orcid{0009-0007-3733-920X}},
\author[B]{\fnms{Philippe}~\snm{Lemey}\ead[label=e2]{philippe.lemey@kuleuven.be}\orcid{0000-0003-2826-5353}}
\and
\author[A]{\fnms{Marc}~\snm{A. Suchard}\thanksref{corrauth}\ead[label=e3]{msuchard@ucla.edu}\orcid{0000-0001-9818-479X}}
\address[A]{Department of Biostatistics,
University of California, Los Angeles, USA \printead[presep={ ,\ }]{e1,e3}}

\address[B]{Department of Microbiology, Immunology and Transplantation, Rega Institute, KU Leuven \printead[presep={,\ }]{e2}}

\end{aug}

\begin{abstract}
Reconstructing evolutionary histories and estimating the rate of evolution from molecular sequence data is of central importance in evolutionary biology and infectious disease research. 
We introduce a flexible Bayesian phylogenetic inference framework that accommodates 
changing evolutionary rates over time
by modeling sequence character substitution processes as inhomogeneous continuous-time Markov chains (ICTMCs) acting along the unknown phylogeny, where the rate remains as an unknown, positive and integrable function of time. 
The integral of the rate function appears in the finite-time transition probabilities  of the ICTMCs that must be efficiently computed for all branches of the phylogeny to evaluate the observed data likelihood. 
Circumventing computational challenges that arise from a fully nonparametric function, we successfully parameterize the rate function as piecewise constant with a large number of epochs that we call the polyepoch clock model. 
This makes the transition probability computation relatively inexpensive and continues to flexibly capture rate change over time. 
We employ a Gaussian Markov random field prior to achieve temporal smoothing of the estimated rate function.  
Hamiltonian Monte Carlo sampling enabled by scalable gradient evaluation under this model makes our framework computationally efficient. 
We assess the performance of the polyepoch clock model in recovering the true timescales and rates through simulations under two different evolutionary scenarios. 
We then apply the polyepoch clock model to examine the rates of West Nile virus, Dengue virus and influenza A/H3N2 evolution, and  
estimate the time-varying rate of SARS-CoV-2 spread in Europe in 2020. 
\end{abstract}

\begin{keyword}
\kwd{Bayesian inference}
\kwd{statistical phylogenetics}
\kwd{inhomogeneous continuous-time Markov chains}
\kwd{molecular clock model}
\kwd{Gaussian Markov random fields}
\end{keyword}

\end{frontmatter}

\section{Introduction}

Phylogenetics is central to understanding the unknown ancestral relationships relating biological entities and the mechanisms that drive their diversification over time.
Reconstructing these relationships into a phylogenetic tree describing this evolutionary history relies on the accumulation of genetic substitutions shared by descent within molecular sequences.
Fossil information \citep{YangRannala2005} or sampling molecular sequences at different time points \citep{Rambaut2000}  can help calibrate this unobserved phylogeny or tree with respect to calendar time, further allowing the estimation of divergence times of different lineages. 
Likelihood-based phylogenetic inference methods generally assume that the observed molecular sequences arise from a homogeneous continuous-time Markov chain (CTMC) acting along the branches of the phylogeny. 
Phylogenetic methods for divergence time estimation rescale this CTMC through
molecular clock models that parameterize the expected number of substitutions per sequence site per unit-time or evolutionary rate on each branch.
The strict molecular clock \citep{Zuckerland1965} is the simplest and assumes a shared rate across all branches. 
More relaxed clock models \citep{Yoder2000, Kishino2001, Yang2003, Drummond2006, Drummond2010} make contrasting assumptions about the patterns of variation in evolutionary rate across different branches or their resulting lineages.

\par 

Phylogenetics also plays a key role in infectious disease research by providing critical insight into how pathogenic viruses emerge and spread. 
Genome sequences of rapidly evolving viruses accumulate significant genetic change over 
time-scales often ranging from a few centuries to even weeks. 
This enables researchers to sample viral genome sequences in real-time over outbreaks \citep{dudas2016} and reconstruct the viral phylogeny from these heterochronous sequences  to estimate the origin of outbreaks, divergence times of specific viral lineages, and the rate at which the virus evolves.  
Evolutionary rate estimates, however, may vary substantially depending on the timeframe of sampling for several viruses 
\citep{Aiewsakun2016, Duchene2014}. 
The common observation is that the rates estimated over shorter time-scales, such as those calculated from contemporary outbreaks, are typically higher than the rates estimated over longer time-scales, such as those based on virus-host co-speciation history or geographic calibration. 
Several clock models have been proposed to correct for this dependence of evolutionary rate estimates on the time-frame of measurement. 
\cite{aiewsakun2015b} show that a simple power law rate decay helps explain the discrepancy between short-term and long-term rate estimates of the foamy virus. 
\cite{Membrebe2019} combine a small-scale change-point or epoch modeling approach \citep{Bielejec2014} with a power-law relationship between rate and time, and estimate a strong time-dependent effect for foamy virus and lentivirus, where the estimated rates vary over four orders of magnitude.

\par 

Dependence of the evolutionary rate on calendar time is evident from these studies, particularly for fast evolving viruses. 
This highlights the importance of accommodating more flexible modeling of the  evolutionary rate directly as a function of time in the phylogenetic inference framework
and requires a relaxation of the current homogeneous CTMC assumption along individual branches in the phylogeny.
In this paper, we overcome this limitation by exploring
inhomogeneous continuous-time Markov chains (ICTMCs) acting along the branches of the phylogeny in which 
the infinitesimal generator matrix characterizing the ICTMC becomes a function of time. 
In general, the finite-time transition probability matrix of an ICTMC has no closed-form expression \citep{Fortmann1977, Kailath1980}.
If we assume, however, that the time-varying evolutionary rate scales all elements of the infinitesimal generator equally, remains positive over all time and is an integrable function with respect to time, then  
the transition probabilities matrices do retain closed-form expressions \citep{Rindos}
involving these simpler integrals that we need to evaluate for all branches to return the observed molecular sequence likelihood.

In a Bayesian inference setting, positing a prior for this unknown evolutionary rate function stands paramount for both incorporating sufficient biological flexibility and yielding practical implementation.
While Gaussian processes \citep{Rasmussen2006} are typical choices for unknown functions, they currently lack efficient approximations of their integrals after necessary transformation to retain positivity.  
We overcome this limitation by 
proposing a solution to make the computation of the transition probabilities inexpensive as well as analytically tractable and flexible.
In particular, we model the evolutionary rate as 
a log-transformed (positive),
  piecewise constant function of time, with a large number of change-points, and we call this the \textit{polyepoch clock model}. 
  
For the polyepoch clock model, we further develop an effective inference framework  
to simultaneously estimate the phylogeny and evolutionary rate trajectory over time by employing a Gaussian Markov random field prior \citep{rueheld} to incorporate the temporal dependence between the large number of highly correlated rate parameters corresponding to the many epochs. 
Importantly, we propose a new algorithm  that computes the gradients of the log posterior with respect to all rate parameters in just linear-time as a function of the number of taxa. 
This enables us to use efficient gradient-based Hamiltonian Monte Carlo (HMC) approaches \citep{Neal2011} to sample from the high dimensional posterior density of the rate parameters. 
Through simulation, we assess the performance of the polyepoch clock model and apply it to examine the time-varying evolutionary rates of 
the West Nile virus in North America (1999 - 2007), 
serotype 3 of Dengue virus  in Brazil (1972 - 2010), 
seasonal influenza A/H3N2 across the globe and 
the rate of spatial diffusion of SARS-CoV2 in Europe from January to October 2010 \citep{Lemey2021}. 
We observe strong time-varying patterns in the estimated rate trajectory for Dengue, A/H3N2 and SARS-CoV-2.

\section{Methods}\label{sec:Methods}

\newcommand\data{\mathbf{Y}}
\newcommand{\datum}{Y}
\newcommand\columns{C}
\newcommand\column{c}
\newcommand\states{S}
\newcommand\state{s}
\newcommand\numtips{N}
\newcommand\branchLengths{T}
\newcommand\branchLength{b}
\newcommand\terminalBranches{\varepsilon}
\newcommand\internalBranches{\mathcal{I}}
\newcommand\phylogeny{{\cal F}}
\newcommand\branch{b}
\newcommand\rates{R}
\newcommand\rate{r}
\newcommand{\siteSpecificRate}{\gamma}
\newcommand\postorderPartial{\mathbf{p}}
\newcommand\postorderElement{p}
\newcommand\preorderPartial{\mathbf{q}}
\newcommand\preorderElement{q}
\newcommand{\probabilityMatrix}[2]{\mathbf{P}^{(#1)} \hspace{-0.2em} \left( #2 \right)}
\newcommand{\probabilityEntry}[4]{P^{(#1)}_{#3 #4} \hspace{-0.2em} \left( #2 \right)}
\newcommand\rateMatrix{\mathbf{Q}}
\newcommand\disjointSet{Y}
\newcommand{\below}[1]{\mathbf{Y}_{\lfloor #1 \rfloor}}
\newcommand{\abbove}[1]{\mathbf{Y}_{\lceil #1 \rceil}}
\newcommand{\transpose}{'}
\newcommand{\estate}{t}
\newcommand{\parent}{k}
\newcommand{\sibling}{j}
\newcommand{\probability}[1]{\mathbb P \hspace{-0.2em} \left( #1 \right)}
\newcommand{\rootDistribution}{\boldsymbol{\pi}}
\newcommand{\given}{\,|\,}
\newcommand{\pa}[1]{{\mathrm{pa}(#1)}}

Relating $N$ aligned molecular sequences lies an unobserved phylogenetic tree $\mathcal{F}$ with $N$ tip nodes and $N - 1$ internal nodes. 
We denote the tip nodes with $1, \dots, N$ and the internal nodes with $N + 1, \dots, 2N - 1$, where we denote the root node with $2N - 1$. 
The sequences are observed at the tip or external nodes only and remain unobserved at the internal nodes.  
Any branch on $\mathcal{F}$ connects a parent node to its child node where the parent node is closer to the root. 
We denote $\pa{i}$ as the parent of node $i$. 
We denote the real time of node $i$ with $t_i$ and the length of the branch connecting node $i$ to $\pa{i}$ with $b_i$, where branch length is measured in expected number of substitutions along the branch per site.
We refer to a branch by the number of the child node it connects.
We model the sequence alignment as arising from conditionally independent ICTMCs acting along the branches of $\mathcal{F}$.
We denote $\cal{S}$ as the state-space of the ICTMC and $S$ as the number of possible states in $\cal{S}$ (e.g. $S = 4$ for nucleotide substitution models).
We denote $C$ as the number of aligned sites.
Suppose we have observed (at tips) and latent (at internal nodes) discrete evolutionary characters $Y_{ic} \in S$, for $i = 1, \dots, 2N - 1$, and $c = 1, \dots, C$. 
We denote the observed data as $\mathbf{Y} = (\mathbf{Y}_1, \dots, \mathbf{Y}_{N})'$, where the columns $\mathbf{Y}_i = (Y_{i1}, \dots, Y_{iC})$ are conditionally independent, for $i  = 1, \dots, 2N - 1$, often with across-site rate variation modeled with a discretized-Gamma prior on site-specific rates, but more complex processes can also be considered \citep{Gill:2025aa}.

We denote the inifnitesimal generator matrix characterizing the ICTMC by $\mathbf{Q}(t)$, where the off-diagonal elements of $\mathbf{Q}(t)$ are the instantaneous transition rates between two different states in $S$ at time $t$ and the diagonals are such that each row of $\mathbf{Q}(t)$ sums to $0$. 
We denote the finite-time transition probability matrix of the ICTMC from time $t_0$ to $t$ by $\mathbf{P}(t_0, t)$.
If $\mathbf{Q}(t)$ is piecewise continuous, then the general solution for $\mathbf{P}(t_0, t)$ is given by the Peano-Baker series 
\citep{Fortmann1977, Kailath1980}:
\begin{equation}\label{eq:peano_baker}
\begin{split}
\mathbf{P}(t_0, t) &= \mathbf{I} + \int_{t_0}^{t} \mathbf{Q}(\tau_1) d\tau_1 + \int_{t_0}^{t} \mathbf{Q}(\tau_1) d\tau_1 \int_{t_0}^{\tau_1} \mathbf{Q}(\tau_2) d\tau_2 d\tau_1 + \dots \\
& + \int_{t_0}^{t} \mathbf{Q}(\tau_1) \int_{t_0}^{\tau_1} \dots \int_{t_0}^{\tau_{k - 1}} \mathbf{Q}(\tau_k) d\tau_k d\tau_{k - 1} \dots d\tau_2 d\tau_1 + \dots . 
\end{split}
\end{equation}
This series does not have a closed-form expression in general. 
However if we assume that $\mathbf{Q}(t)$ and $\int_{t_0}^{t} \mathbf{Q}(\tau) d\tau$ commute for all $t$, then the transition probability matrix has a further simplified expression which is given by
\begin{equation}\label{eq:pwc_tpm}
\mathbf{P}(t_0, t) = \exp \left[\int_{t_0}^{t} \mathbf{Q}(\tau) d\tau \right].
\end{equation}
The commutative property between $\mathbf{Q}(t)$ and $\int_{t_0}^{t} \mathbf{Q}(\tau) d\tau$ holds for all $t$ when we assume $\mathbf{Q}(t) = f(t) \mathbf{Q}$ for $t > 0$, where $\mathbf{Q}$ is a valid infinitesimal generator matrix independent of time and $f(t)$ is an unknown, positive and integrable function describing the evolutionary rate trajectory over time. 
Under this assumption, the transition probability matrix of the ICTMC acting along the branch connecting node $i$ to $\pa{i}$ includes the integral of $f(t)$ from $t_{\pa{i}}$ to $t_{i}$ \citep{Rindos} as shown below
 \begin{equation}
 \mathbf{P}(t_{\pa{i}}, t_i) = \exp\left[\left(\int_{t_{\pa{i}}}^{t_i} f(t) dt \right) \mathbf{Q}\right]. 
 \label{eq:ictmc-tpm}
\end{equation}

To evaluate the data likelihood, the integral $ \int_{t_{\pa{i}}}^{t_i} f(t) d t$ needs to be computed for all $2N - 2$ branches of the phylogeny.
To ensure that $f(t)$ remains flexible and inexpensively evaluable, we model
$f(t)$ as a positive, piecewise constant function of time, with a large number of blocks or epochs, 
as the polyepoch clock model.
We assume that there are $M$
temporal grid-points denoted as $ w_M < \dots < w_1$ that divide time into $M + 1$ epochs where $w_1$ is closest to the present,
and model  
$f(t) = \theta_{M + 1}$ for $t < w_{M}$ and $f(t) = \theta_m$, $w_m \leq t < w_{m-1}$, $m = 1, \dots, M$, where $w_0$ is the present. 
Let $\boldsymbol{\theta} = (\theta_1, \dots, \theta_{M + 1})$ denotes the vector of rate parameters where $\theta_m > 0$ for $m = 1,  \dots, M + 1$. 
We use the example phylogenetic tree in Figure \ref{phylo-tree-1} with 3 tip nodes and 2 internal nodes to demonstrate the polyepoch clock model with $2$ grid-points $w_2 < w_1$ (gray dashed lines) 
and rate parameters $\theta_1, \theta_2, \theta_3$.
The real time of each node is depicted with black dotted lines. The rate along the branch connecting node $1$ to its parent, node $4$, is $\theta_2$ from $t_4$ to $w_1$ and $\theta_1$ from $w_1$ to $t_1$. Similarly the rate varies in a piecewise constant manner along the other three branches as well.
\begin{figure}[!htp]
	\centering
	\includegraphics[width=0.8\linewidth]{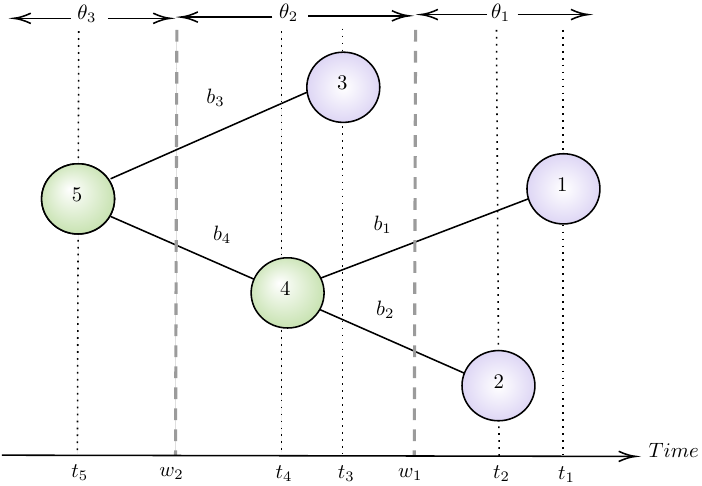} 
\caption{Piecewise constant evolutionary rate along a phylogenetic tree with three taxa. 
The real time of each node is depicted with black dotted lines.
Time is divided into three epochs by the two grid-points $w_2 < w_1$ (gray dashed lines) such that the rate is $\theta_3$ at any time point less than $w_2$, $\theta_2$ at any time point greater than equal to $w_2$ and less than  $w_1$, and $\theta_1$ at any time point greater than or equal to $w_1$.}
	\label{phylo-tree-1}
\end{figure}

In general, if $w_p$ is the largest grid-point less than or equal to $t_i$ and $w_q$ is the smallest grid-point greater than or equal to $t_{\pa{i}}$, then the integral appearing in $\mathbf{P}(t_{\pa{i}}, t_i)$, which is essentially the expected number of substitutions $b_i$ from $t_{\pa{i}}$ to $t_i$, reduces to
\begin{equation}
b_i = \int_{t_{\pa{i}}}^{t_i} f(t) dt =  \left( \theta_{q + 1} (w_q - t_{\pa{i}} ) + \sum_{m = p} ^ {q - 1} \theta_{m + 1} (w_m - w_{m +1}) + \theta_p (t_i - w_p)  \right).
\label{eq:pctdm-tpm}
\end{equation}
Given the transition probabilities, we can proceed to evaluate the observed data likelihood under the polyepoch clock model.
To ease presentation, we focus on the data likelihood for a single site. 
Given the rate parameters $\boldsymbol{\theta}$, the phylogeny $\mathcal{F}$ and other evolutionary parameters, the likelihood at a given site is the probability of the characters observed at the tips, marginalized over all possible latent states at the internal nodes.
The likelihood can be computed in a tractable manner using the pruning algorithm \citep{Felsenstein1973, Felsenstein1981}, which follows a post-order tree traversal, starting from the tips, visiting each node once in a descendant to parent fashion, until the root node is reached. 
We note that the observed data likelihood for multiple sites under the across-site rate variation model is a simple weighted-mixture model over these likelihoods.
Apart from the rate parameters $\boldsymbol{\theta}$ and the phylogeny $\mathcal{F}$, other evolutionary parameters that appear in the likelihood include elements of $\mathbf{Q}$ and the across site rate variation model parameter, commonly denoted with $\alpha$.

\subsection{Bayesian Inference}\label{sec:bayesian_inference}

Bayesian phylogenetic inference proceeds by sampling from the joint posterior density of all parameters of interest given the observed data, which is proportional to the product of the observed data likelihood and the joint prior density.
First, we need to choose a suitable prior for our primary parameters of interest $\boldsymbol{\theta}$ such that the prior incorporates the temporal dependence between the rates in adjacent epochs.
A prior popularly used in the literature to model dependent random variables is the Gaussian Markov random field \citep{rueheld}. 
We reparameterize $\zeta_m = \log \theta_m$ for $m = 1, \dots, M + 1$ and model the first-order difference between $\zeta_1, \dots, \zeta_{M + 1}$ to be distributed as follows
\begin{equation}
\zeta_{m+1}-\zeta_m \given \tau \overset{\text{ind}}{\sim} N(0, d_m/\tau), \quad m = 1, \ldots, M.
\label{eq:weighted_gmrf1}
\end{equation}
Here $d_{m}$ is the absolute difference between the midpoints of the $m$th and $(m+1)$th epochs. 
This assigns an improper Gaussian Markov random field (GMRF) prior to $\boldsymbol{\zeta} = (\zeta_1, \dots, \zeta_{M + 1})'$
\begin{equation}
\mathrm{P}(\boldsymbol{\zeta} \given \tau) \propto \tau^{\frac{M}{2}}\exp{-\left[\frac{\tau}{2} \boldsymbol{\zeta}'\left(\mathbf{D_w} - \mathbf{W}\right) \boldsymbol{\zeta}\right]},
\label{eq:weighted_gmrf2}
\end{equation}
where the elements of the $(M+1) \times (M + 1)$ matrix $\mathbf{W} = \left\{ W_{jk} \right\}$ are defined as $W_{jk} = 0$ for $| k - j| \neq 1$ and $W_{jk}= d_{\min\{j, k\}}$ for $|k - j| = 1$. 
The $(M+1) \times (M + 1)$ diagonal matrix $\mathbf{D_w}$ is defined as $\mathbf{D_w} = \text{diag}(w_{1+}, \dots, w_{(M+1)+})$ where $w_{j+}$ corresponds to the sum of the elements of the $j$th row of $\mathbf{W}$ for $j = 1, \dots, M + 1$. 
The prior \eqref{eq:weighted_gmrf2} corresponds to a singular multivariate Gaussian distribution since each row of $\mathbf{D_w} - \mathbf{W}$ sums to 0. 
When using improper priors, it is necessary to ensure that the resulting posterior distributions are proper in order to sample from them using MCMC algorithms. 
We provide a justification on why the resulting posterior distribution of $\boldsymbol{\theta}$ remains improper if we use the improper GMRF prior for $\boldsymbol{\zeta}$  in Appendix~\ref{sec:improperGMRF_appendix}. 
We proceed therefore with a class of proper GMRF priors which replaces $\mathbf{D_w} - \mathbf{W}$ in \eqref{eq:weighted_gmrf2} with $\mathbf{D_w} - \rho \mathbf{W}$ as follows 
\begin{equation}
\mathrm{P}(\boldsymbol{\zeta} \given \tau) \propto \tau^{\frac{M}{2}}\exp{-\left[\frac{\tau}{2} \boldsymbol{\zeta}'\left(\mathbf{D_w} - \rho \mathbf{W}\right) \boldsymbol{\zeta}\right]}.
\label{eq:propprior}
\end{equation}
For $|\rho| < 1$, $\mathbf{D_w} - \rho \mathbf{W}$ is non-singular due to diagonal dominance, which asserts that, for any symmetric matrix $\mathbf{A}$,  if $a_{jj} > 0$ and $a_{jj} > \sum_{k \neq j} |a_{jk}|$, $\mathbf{A}$ is positive definite. 
Here $\rho$ can be thought of as the expected proportional reaction of $\log \theta_m$ to the log transformed rates in its adjacent epochs. 
Since the prior is improper for $\rho = 1$ and proper for $\rho < 1$, choosing say $\rho = 0.99$, allows very large temporal association between the rates across successive epochs while ensuring that the posterior is proper.
Alternatively, we can assign a suitable prior such as the logit normal prior to $\rho$.

\par 

Further, the GMRF precision $\tau$ controls the smoothness of the rate trajectory. 
Since we do not usually have prior knowledge about the smoothness, we assign the following relatively uninformative Gamma prior to $\tau$
\begin{equation}
\mathrm{P}(\tau) \propto \tau^{a - 1} e^{- b \tau},
\label{precisionprior}
\end{equation}
where $a = b = 0.001$ so that the prior has expectation $1$ and variance $1000$. 
We assess the sensitivity of the polyepoch clock model to the choice of diffuse priors for $\tau$ in Appendix~\ref{sec:sensitivity_appendix}.

\par

In addition to priors for $\boldsymbol{\theta}$, $\tau$ and for $\rho$ if not fixed, we assign suitable priors $\mathrm{P}(\mathbf{Q})$, $\mathrm{P}(\alpha)$ and $\mathrm{P}(\mathcal{F})$ to the $\mathbf{Q}$, $\alpha$ and $\mathcal{F}$ respectively following standard Bayesian phylogenetic practices \citep{Baele2025}.
We sample from the following joint posterior density 
\begin{equation}
\mathrm{P}(\boldsymbol{\theta}, \tau, \rho, \mathbf{Q}, \alpha, \mathcal{F} \given \mathbf{Y}) \propto \mathrm{P}(\mathbf{Y} \given \boldsymbol{\theta}, \mathbf{Q}, \alpha, \mathcal{F}) \mathrm{P}(\boldsymbol{\theta}\given\tau, \rho) \mathrm{P}(\tau) \mathrm{P}(\rho) \mathrm{P}(\mathbf{Q}) \mathrm{P}(\alpha) \mathrm{P}(\mathcal{F}).
\label{eq:jointposterior}
\end{equation}
To estimate the high dimensional posterior density of the rate parameters that are highly correlated, we use HMC to sample from their full conditional distribution within a Metropolis-within-Gibbs scheme. 
HMC is a gradient-based Markov chain Monte Carlo (MCMC)
method that exploits numerical solutions of Hamiltonian dynamics \citep{Neal2011} to efficiently explore the parameter space and scales well with higher dimensions. 
The partial derivatives of the log posterior distribution with respect to  $\theta_m$ required for HMC sampling can be computed as follows for $m = 1, \dots, M + 1$:
 \begin{equation}
 	\frac{\partial}{\partial \theta_{m}} \log \mathrm{P}(\boldsymbol{\theta}, \tau, \rho, \mathbf{Q}, \alpha, \mathcal{F} \given \mathbf{Y}) = \sum_{i=1}^{2N-2}
 	\frac{\partial \log \mathrm{P}(\boldsymbol{\theta}, \tau, \rho, \mathbf{Q}, \alpha, \mathcal{F} \given \mathbf{Y})}{\partial b_{i}}  \frac{\partial b_{i}}{\partial \theta_{m}}.
 	\label{eq:gradient_rate} 
 \end{equation}
The gradient $\partial \log \mathrm{P}(\boldsymbol{\theta}, \tau, \rho, \mathbf{Q}, \alpha, \mathcal{F} \given \mathbf{Y}) / \partial b_{i}$ for all $2N - 2$ branches can be obtained simultaneously in linear-time complexity ${\cal O}(N C S^2)$ \citep{Ji2019}.
 The gradient $\partial b_i / \partial \theta_m$ of the $i$-th branch length with respect to mth rate parameter can be computed from equation~\ref{eq:pctdm-tpm} using the fact that $b_i$ is a linear combination of $\theta_1, \dots, \theta_{M + 1}$ for $i = 1, \dots, 2N - 2$.
 However, the contribution of $\theta_m$ to $b_i$ is positive only if the branch connecting node $i$ to $\pa{i}$ cuts through the mth epoch $(w_{m - 1}, w_m]$ and $0$ otherwise.
 The cost of computing $\partial b_i / \partial \theta_m$ for all branches and rates is most expensive, i.e. $\mathcal{O}(NM)$, when each branch cuts through each of the $M + 1$ epochs, and this cost can therefore be less than $\mathcal{O}(NM)$ in general, depending on the tree and the grid-points.
Therefore, this chain-rule extension for all $\theta_m$ simultaneously carries a cost of atmost $\mathcal{O}(N C S^2 + NM)$. 
Calculating the gradient $\partial \log \mathrm{P}(\boldsymbol{\theta}, \tau, \rho, \mathbf{Q}, \alpha, \mathcal{F} \given \mathbf{Y}) / \partial b_{i}$ for all $2N - 2$ branches using the standard peeling algorithm would incur a higher cost of  $\mathcal{O}(N^2 C S^2)$ operations, leading to an overall gradient evaluation cost of atmost $\mathcal{O}(N^2 C S^2 + NM)$ operations. Finite difference approximations of the gradients, such as central differences, would require $\mathcal{O}(NCMS^2)$ operations, which becomes prohibitive for large $M$, as often required for accurate temporal resolution. Thus, our analytic, linear-time likelihood gradient evaluation combined with the chain rule offers the most computationally efficient strategy.

 \section{Results}
\label{sec:Results}
 
 \subsection{Simulation Study}\label{sec:simulation}
 
We examine the ability of the polyepoch clock model to recover the true evolutionary rate function and time of most recent common ancestor (tMRCA) under two different simulation scenarios. 
For each scenario, we first infer the maximum clade credibility tree (MCC) obtained by analyzing the Dengue virus example from \cite{Nunes2014} while assuming a random-effects relaxed clock model \citep{Ji2019}, a non-parametric coalescent prior \citep{Skyride} and the general-time-reversible substitution process \citep{Tavare1986}. 
For the first simulation, we generate simulated sequence data on the tips of this MCC tree using $\pi$BUSS \citep{pibuss} assuming a strict clock model and the HKY substitution process \citep{hky} with a fixed transition/transversion ratio of 2. 
We 
then draw inference both
under a strick clock and the polyepoch clock model with $100$ epochs of equal width between 2005 and 1963.
We fix the last grid-point 
so that 
it
covers
the expected tree height and captures all the information about the evolutionary rate 
from the data. 
For inference, we assume 
the HKY substitution process \citep{hky} and a non-parametric tree prior \citep{Skygrid}. 
The top panel of Figure~\ref{fig:scrsim} shows the posterior median (solid blue line) and $95\%$ Bayesian credibile interval or BCI (shaded area) of the estimated evolutionary rate under the polyepoch clock model (in blue) and the strict clock model (in red). 
The posterior median under the polyepoch clock model is almost constant and perfectly aligns with the true rate (black dashed line), except towards the present, where the sampling times are concentrated. 
The estimated rate under the strict clock model is slightly overestimated.
The $95\%$ BCIs under the polyepoch clock model become wider near the root due to less information being available to estimate the rate closer to the root. 
Also, we observe significantly higher uncertainty in estimation under the polyepoch clock  model as compared to the strict clock model, which is expected, since the latter has only one parameter associated with the clock model. 
The bottom panel shows the posterior density of the tMRCA under the polyepoch clock model (in blue) and strict clock model (in red) along with the true tMRCA indicated with a black dashed line.
We observe that the posterior density of the tMRCA has concentrated around its true value.
The bottom right panel shows the posterior density of the GMRF precision. 
The posterior median is $385$ and the $95\%$ BCI is $(49, 2520)$. 
This is consistent with the fact that the estimated rate under this simulation setup is not expected to have much variation.
Hence the polyepoch clock model performs almost as well as the strict clock model in recovering the true rate and tMRCA under this simulation setup where the strict clock represents the truth.
\begin{figure}
     \centering
   	\includegraphics[width=0.8\linewidth]{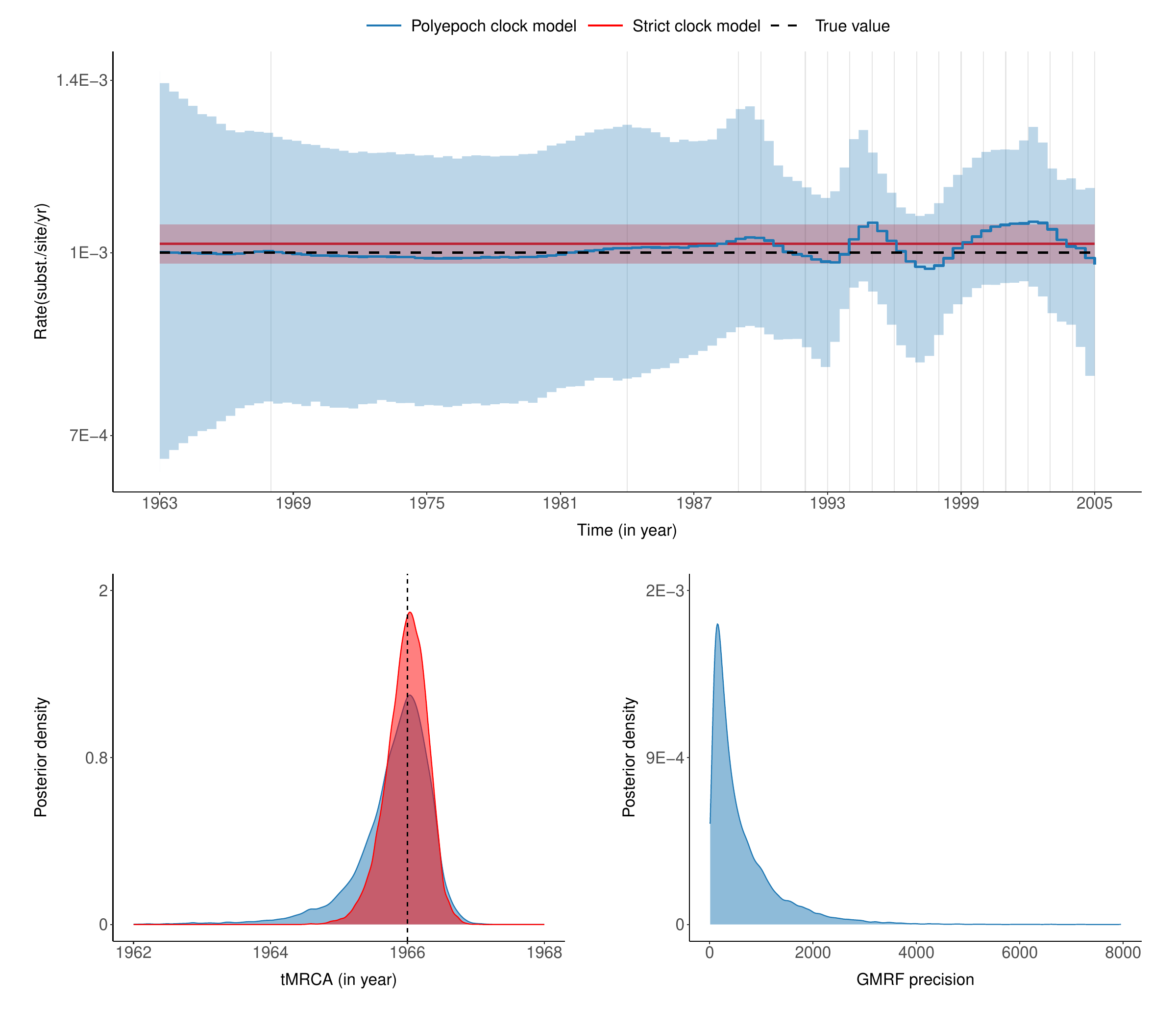} 
  	\caption{Strict clock model simulation results.
  	The top panel shows the posterior median (solid line) and the $95\%$ Bayesian credible interval or BCI (shaded area) of the evolutionary rate under the polyepoch clock model (in blue) and strict clock  model (in red). 
  	The black dashed line shows the true rate and vertical gray lines represent sampling times.
  	In the bottom left panel, the posterior density of the time of most recent common ancestor (tMRCA) under the polyepoch clock model and strict clock model is shown along with the true tMRCA (black dashed line).
  	In the bottom right panel, the posterior density of the GMRF precision under the polyepoch clock model is shown.}
  	\label{fig:scrsim}
\end{figure}
Next, we simulate sequences under a log-linear clock model with rate $f(t) = \text{e}^{-4.5 - 0.05 t}$ and using the same MCC tree, substitution process and demographic model as the previous simulation. 
We then draw inference under the polyepoch clock model with the same grid-points used in the previous analysis and also under the true log-linear rate model. 
The top panel of Figure~\ref{fig:loglinsim} shows that the posterior median under the polyepoch clock model (solid blue line) aligns more closely with the true rate (black dashed line) near the present and deviates from the true rate as we move closer to the root.
The rate under the polyepoch clock model is significantly overestimated near the root.
The estimated rate under the log-linear clock model (solid red line) perfectly aligns with the true rate and has significantly less uncertainty as compared to the polyepoch clock model due to the difference in the number of parameters to be estimated.
The bottom panel of Figure~\ref{fig:loglinsim} shows the posterior density of the tMRCA under the polyepoch clock model (in blue) and the log-linear clock model (in red),  along with the true tMRCA (black dashed line). 
The posterior density under the log-linear clock model is concentrated around the true value.
However, the tMRCA is overestimated under the polyepoch clock model, which is expected, due to the overestimation of the rate near the root.
The bottom right panel shows the posterior density of the GMRF precision under the polyepoch clock  model. 
The posterior median is $20$ and the $95\%$ BCI is $(9, 40)$. 
The relatively smaller estimated value of the GMRF precision for this simulated dataset is consistent with the expected larger variation in the estimated evolutionary rate.

\begin{figure}
     \centering
   	\includegraphics[width=0.8\linewidth]{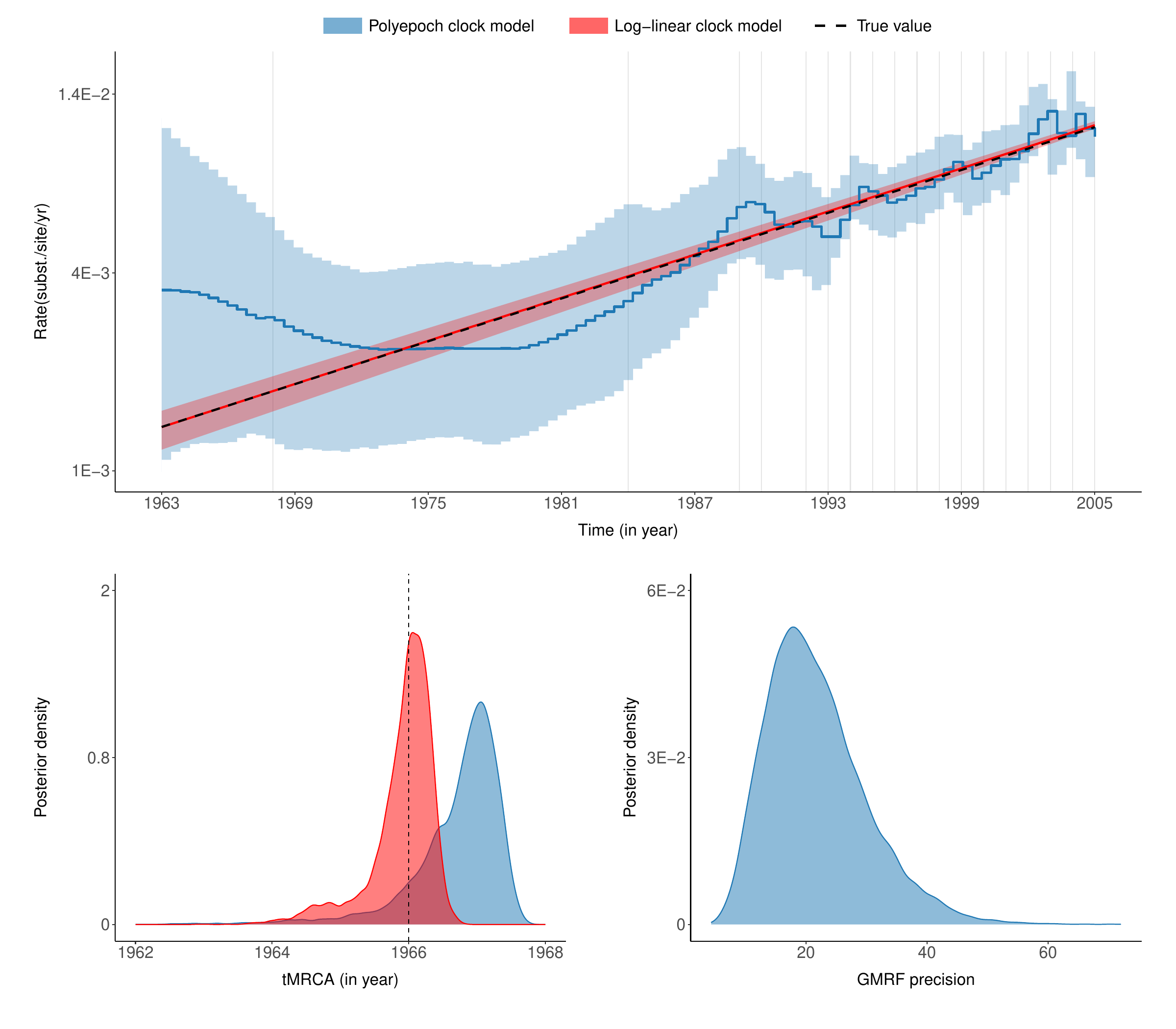} 
  	\caption{Log-linear clock model simulation results.
  	The top panel shows the posterior median (solid line) and the $95\%$ BCI (shaded area) of the evolutionary rate under the polyepoch clock model (in blue) and the log-linear clock model (in red). 
  	The black dashed line shows the true rate and vertical gray lines represent sampling times.
 	In the bottom left panel, the posterior density of the tMRCA under the polyepoch clock  model and log-linear clock model is shown along with the true tMRCA (black dashed line).
   	In the bottom right panel, the posterior density of the GMRF precision under the polyepoch clock model is shown.}
  	\label{fig:loglinsim}
\end{figure}
   
 \subsection{Exemplars}\label{subsec:rda}
     
 \noindent We perform Bayesian phylogenetic analysis under the polyepoch clock model to examine the evolutionary rate of West Nile virus in North America (1999 - 2007) \citep{Pybus2012},  serotype 3 of Dengue virus  in Brazil (1972 - 2010), seasonal influenza A/H3N2 across the world, and the spatial diffusion rate of  SARS-CoV2 in Europe between January to October 2020 \citep{Lemey2021}. 
 For the aforementioned data sets, we choose the grid-points in the polyepoch clock model to be equally spaced. 
 As a further example for model flexibility, we also apply the polyepoch clock model to examine the evolutionary rate of the Lassa virus \citep{Andersen2015} in West Africa (see Appendix~\ref{sec:Lassa_appendix}), where we choose denser grid-points during the timeframe when all sampling times are concentrated to capture a finer scale variation in the rate and more widely spaced grid-points beyond the informative timeframe.
 We use the BEAST X software package \citep{Baele2025} in combination with the high-performance phylogenetic library BEAGLE \citep{Ayres2019} to infer the evolutionary rate of the four viruses under our proposed polyepoch clock   model. 
 For the first three datasets, we assume a general-time-reversible substitution process \citep{Tavare1986} and a non-parametric tree prior  \citep{Skygrid}, where the cutoff for the tree-prior variation and the polyepoch clock model are chosen to be sufficiently greater than the root height of the unobserved coalescent process based on previous studies, to ensure all the information about the population dynamics and the evolutionary rate available from the data is captured. 
 For SARS-CoV2, we perform a joint sequence and discrete trait inference (see section~\ref{sec:sc2} for details). 
 The BEAST analyses comprise 40 million MCMC iterations for the West Nile virus data set, 20 million iterations for the Dengue virus data set, 3 million iterations for the seasonal influenza data set and 25 million iterations for the SARS-CoV2 data set, with sufficiently high effective sample size (ESS) values for all parameters of interest, as assessed using Tracer \citep{Tracer}. 
 We summarize the resulting MCC phylogenies using TreeAnnotator and plot the phynogenies using \texttt{ggtree} \citep{ggtree} in R. 
 We provide instructions and the BEAST XML files for reproducing these analyses on Github at \url{https://github.com/suchard-group/polyepoch_model_supplement.git}.
 
 \subsubsection{West Nile Virus}
 
West Nile virus (WNV) is a mosquito-borne RNA virus whose primary host is birds. 
It was first detected in the United States in New York City in August 1999, and by 2004 the infection had spread to the country's West Coast \citep{Pybus2012}.  
This example consists of 104 full genomes, with a total alignment length of 11,029 nucleotides, which were collected from infected human plasma samples from 1999 to 2007 as well as near-complete genomes obtained from GenBank \citep{Pybus2012}. 
We fit the polyepoch clock model to this dataset with 60 epochs of equal width between 1995 to 2007. 
The bottom panel of Figure~\ref{fig:wnv} shows the posterior median estimate of the evolutionary rate (solid blue line) under the polyepoch clock model, which appears to be roughly constant.
 We also fit the strict clock model to these data to compare its fit relative to the polyepoch clock model.
The posterior median evolutionary rate under the strict clock model and the polyepoch clock model are quite close as shown in Figure~\ref{fig:wnv}.
The estimated rate under the strict clock model has significantly less uncertainty compared to the polyepoch clock model.
The top panel of Figure~\ref{fig:wnv} shows the inferred MCC tree under the polyepoch clock model.
The posterior median of the tMRCA is $1998$, and its $95\%$ BCI is $\left[1997, 1999\right]$. 
We further estimate the log marginal likelihood of the data under the polyepoch clock model and strict clock model using generalized stepping stone sampling \citep{Baele2016}. 
The estimated log marginal likelihood is approximately $-25482$ under the polyepoch clock model and $-25455$ under the strict clock model. 
Hence, the strict clock model fits these data relatively better as compared to the polyepoch clock model.

\begin{figure}
     \centering
   	\includegraphics[width=0.75\linewidth]{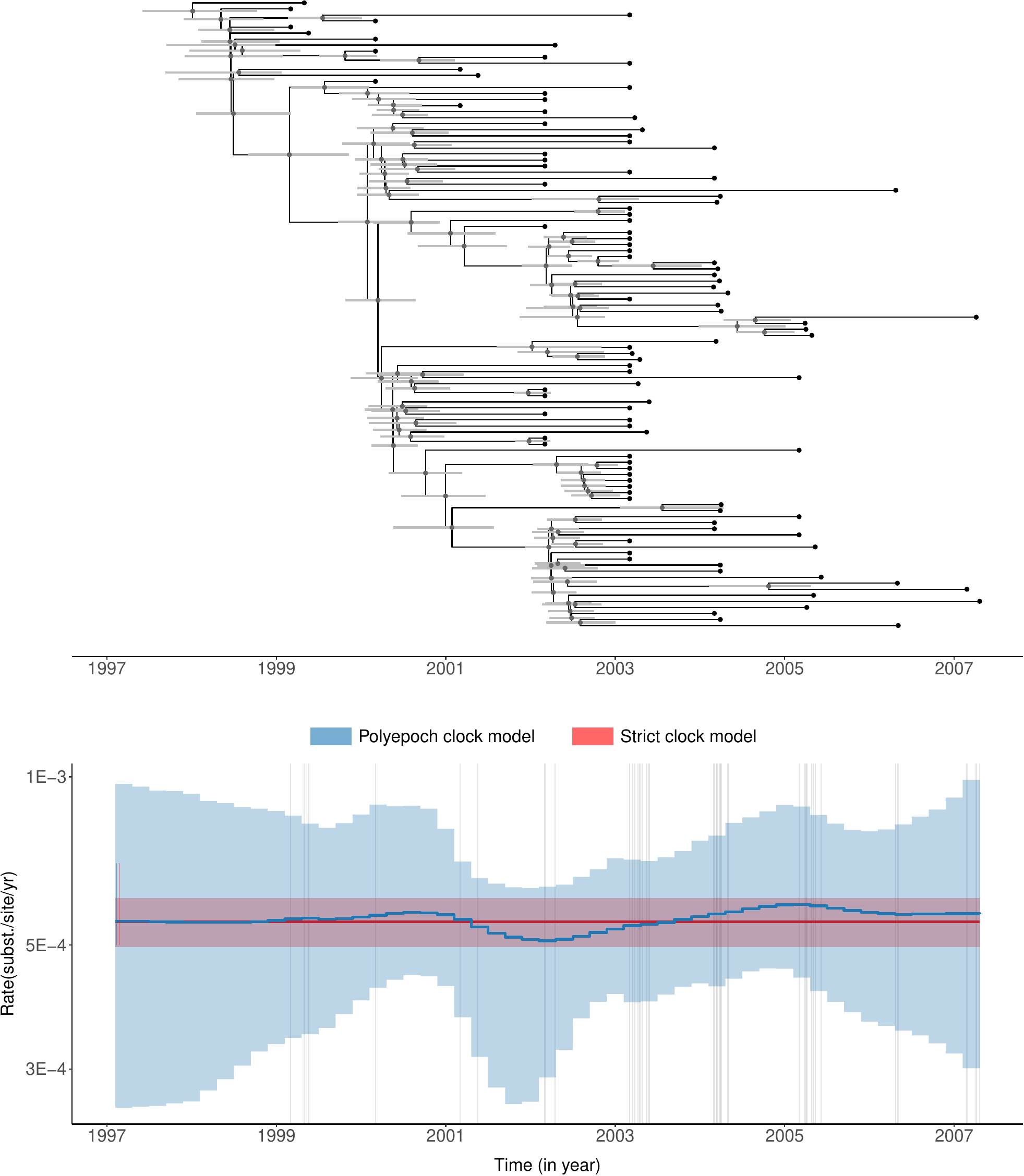} 
  	\caption{Evolutionary dynamics of West Nile virus in North America from 1997 to 2007. 
  	The top panel shows the MCC tree under the polyepoch clock model, where the posterior median and $95\%$ BCI of the internal node dates are depicted with gray circles and gray shaded bars respectively. 
  	The bottom panel shows the  posterior median (solid line) and the $95\%$ BCI (shaded area) of the evolutionary rate under the polyepoch clock model (in blue)  and the strict clock model (in red). 
  	The vertical gray lines represent sampling times.}
  	\label{fig:wnv}
\end{figure}

\subsubsection{Dengue Virus}
  Dengue fever caused by the Dengue virus is one of the most widespread mosquito borne diseases, causing around 400 million human infections each year. 
  \cite{Nunes2014} select 352 serotype 3 Dengue (DENV-3) sequences from a total of 639 complete genomes sampled between 1972 and 2010 from 31 distinct countries in Southeast Asia, North America, Central America, the Caribbean and South American countries. 
  We analyze these DENV-3 sequences under the polyepoch clock model with 200 epochs of equal width between 1970 and 2010. 
  
Figure~\ref{fig:dnv} shows the posterior median (solid blue line) and $95\%$ BCI (blue shaded area) of the evolutionary rate.
  We observe a strong time-varying pattern, particularly where the data are informative. 
  The rate decreases by a factor of almost 10 from 1995 to 2000, whereas from 2003 to 2009, the rate becomes 10 times higher. 
  The general trend appears to be decreasing from 1990 to 2000 and increasing from 2003 to 2008. 
  The most variation in the estimated rate is observed between 1990 to 2010, which coincides with the emergence of major regional lineages in the Caribbean, Venezuela, Central America and Brazil. 
  In order to ascertain whether the observed temporal variation is due to lineage-specific differences in evolutionary rate, we fit a random local clock model \citep{Drummond2010} to these data. 
  The random local clock model allows posterior identification of a series of local molecular clocks, each extending over a subregion of the full phylogeny. 
  We estimate the log marginal likelihood of the data under both clock models using generalized stepping stone sampling \citep{Baele2016}. 
  The estimated  log marginal likelihood is approximately $-57100$ under the random local clock model and $-56880$ under the polyepoch clock model.
  Hence, the polyepoch clock model yields a better fit than the random local clock model, thereby suggesting that the observed variation in the evolutionary rate of the Dengue virus cannot be entirely explained by lineage-specific rate shifts.
  By allowing flexible rate changes over time, the polyepoch clock model provides a more accurate representation of the temporal variation in the evolutionary rate of these Dengue viruses.
  The top panel of Figure~\ref{fig:dnv} shows the inferred MCC tree under the polyepoch clock model, where tip nodes are colored according to the geographical location and the posterior median and $95\%$ BCI of the internal node dates are indicated with grey circles and gray shaded bars respectively.
  We identify the two Brazilian lineages I and II as found in the original study \cite{Nunes2014}. 
  The posterior median of the tMRCA is $1972$ (posterior median) and its $95\%$ BCI is $(1963, 1973)$. 

\begin{figure}
     \centering
   	\includegraphics[width=0.75\linewidth]{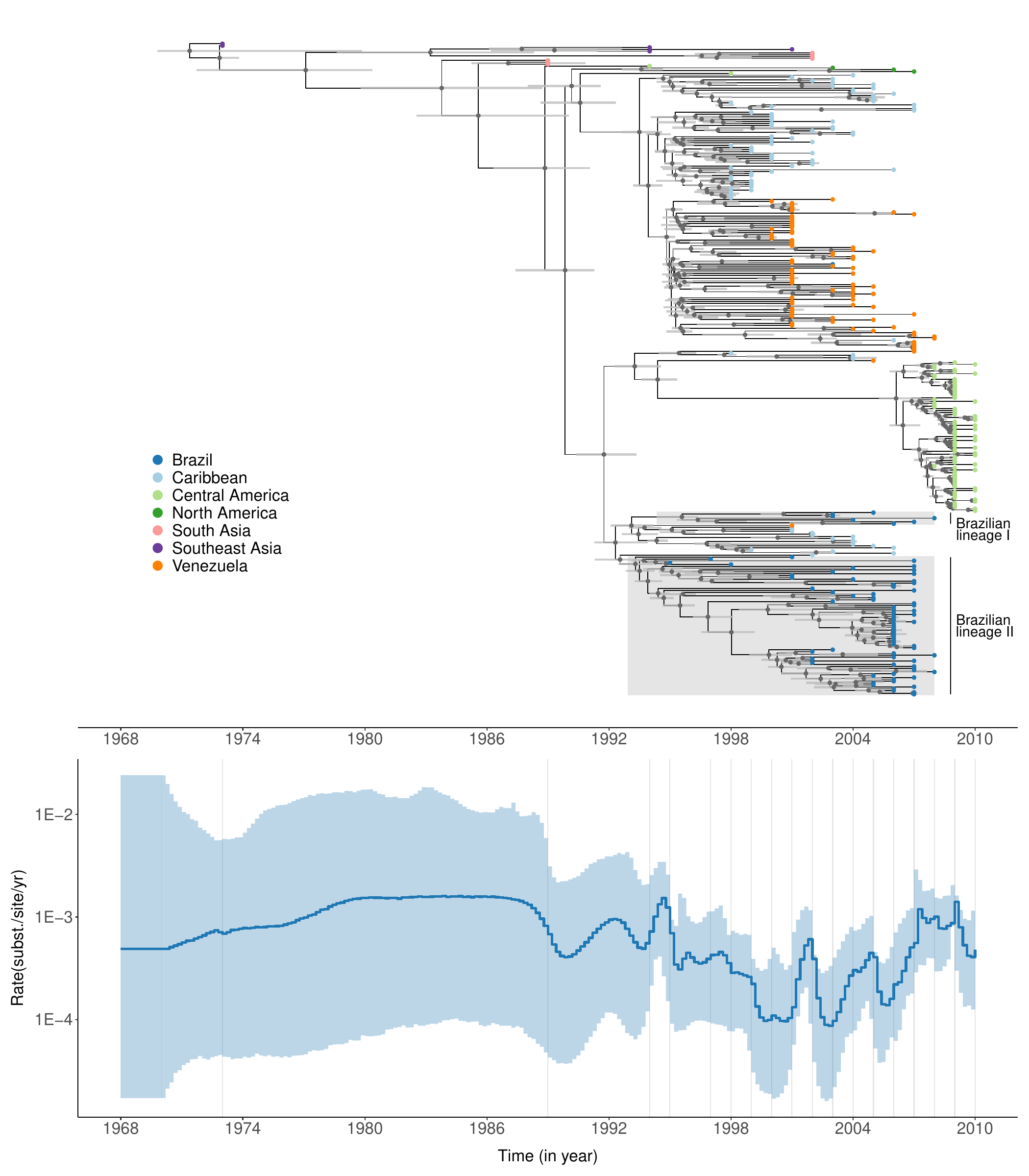} 
  	\caption{Evolutionary dynamics of Dengue virus from 1968 to 2010 under the polyepoch clock model.
  	The top panel shows the MCC tree. 
  	  	The tips are colored according to geographical location and the posterior median and $95\%$ BCI of internal node dates are depicted with gray circles and gray shaded bars respectively.
  	  	The two Brazilian lineages I and II found in the original study are highlighted in gray.
  	The bottom panel shows the posterior median (solid blue line) and the $95\%$ BCI (blue shaded area) of the evolutionary rate from 1970 to 2010. 
  	Vertical gray lines represent sampling times. }
  	\label{fig:dnv}
\end{figure}  

 \subsubsection{Seasonal Influenza A/H3N2}
 Seasonal influenza
 is an acute respiratory infection caused by influenza viruses. 
 There are around a billion cases of seasonal influenza annually, including 290,000 to 650,000 respiratory deaths annually. 
 Among the different types of influenza viruses, influenza A is highly virulent and causes seasonal epidemics in humans, mainly during winter in non-tropical regions. 
 We study the temporal variation in the evolutionary rate of the H3N2 subtype of influenza A (A/H3N2). 
 The data set contains 402 A/H3N2 sequences sampled globally from 1968 to 2010.  
 We fit the polyepoch clock model with 360 epochs of equal width from 1966 to 2010. 
 Much like the seasonal pattern in influenza incidence, there may also be seasonal variation in the evolutionary rate. 
 To avoid oversmoothing that could obscure such potential cyclical patterns, we assign a logit-normal prior to $\rho$ with location parameter 0 and scale parameter 1. 
 
  The top panel of Figure~\ref{fig:h3n2} shows the inferred MCC tree under the polyepoch clock model, where tip nodes are colored according to the geographical location and the posterior median and $95\%$ BCI of the internal node dates are indicated with grey circles and gray shaded bars respectively.
 The bottom panel of Figure~\ref{fig:h3n2} shows the posterior median and $95\%$ BCI of the evolutionary rate under the polyepoch clock model. 
 We observe a seasonal pattern with peaks occurring between December and February each year, which is the mid-point of the influenza season in the Northern hemisphere. 
 We also observe that the peak becomes higher during the recent years, from 2001 to 2011. 
 The posterior median of the tMRCA is 1965 and the $95\%$ BCI is $(1961, 1966)$. 
 The posterior median of $\rho$ is $0.26$ and the $95\%$ BCI is $(0.07, 0.58)$. We note that the posterior density of $\rho$ has no mass approaching 1. This highlights the importance of
  estimating $\rho$ for this example to avoid over-smoothing for rapidly changing rates.  
 The prior and posterior density of $\rho$ are significantly different (see density plot in top panel of Figure~\ref{fig:h3n2}), which suggests that our model remains robust to the choice of the hyperprior for $\rho$.
 
\begin{figure}
     \centering
   	\includegraphics[width=0.8\linewidth]{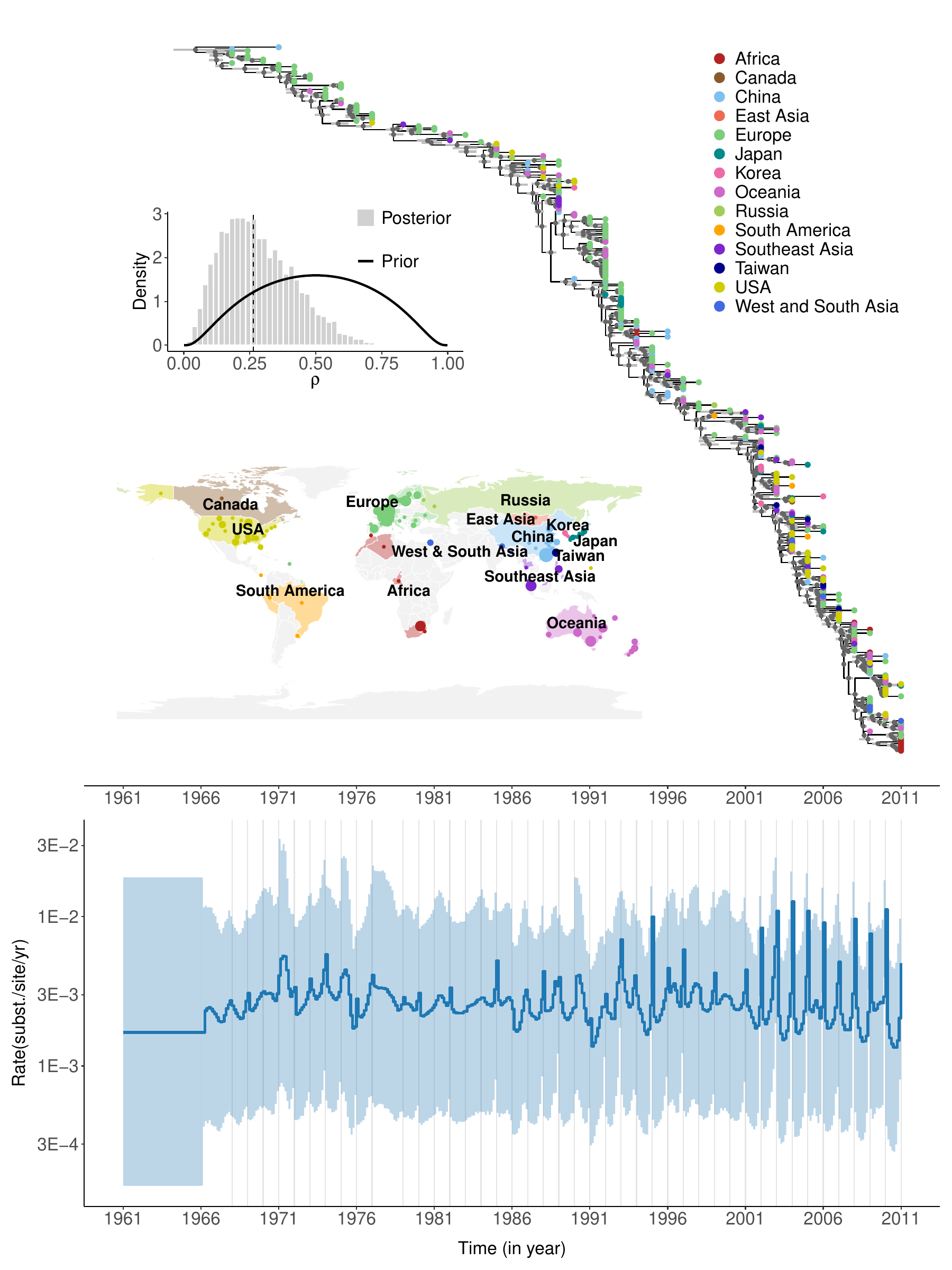} 
  	\caption{Evolutionary dynamics of seasonal influenza A/H3N2 across the globe from 1961 to 2011 under the polyepoch clock model.
  	The top panel shows the MCC tree, where the tips are colored according to geographic location and the posterior median and $95\%$ BCI of internal node dates are depicted with gray circles and gray shaded bars respectively. 
  	The geographic areas for which A/H3N2 sequence samples were used in the analysis are shown with colored dots on the map. 
  	The gray density plot shows the posterior density of $\rho$ and the prior density is shown with a solid black line. 
  	The bottom panel shows the posterior median (solid blue line) and the $95\%$ BCI (blue shaded area) of the evolutionary rate from 1961 to 2011. 
  	Vertical gray lines represent sampling times.}
  	\label{fig:h3n2}
\end{figure}  
 
  \subsubsection{SARS-CoV-2 Resurgence in Europe}\label{sec:sc2}
  
  SARS-CoV-2 is responsible for the COVID-19 pandemic with an estimated 7 million deaths worldwide. 
  After the first wave of SARS-CoV-2 infections in Spring 2020, the number of COVID-19 infections started rising rapidly in many European countries during late Summer 2020, gradually leading to a second wave by October 2020. 
  We analyze 3959 SARS-CoV2 genomes from Belgium, France, Germany, Italy, the Netherlands, Norway, Portugal, Spain, Switzerland and the UK, from both the first and second wave, collected from GISAID on November 3, 2020 \citep{Lemey2021}.
  We perform a joint Bayesian sequence and trait inference by incorporating the genome sequences, the country and date of sampling, as well as mobility and connectivity data. 
  We model the effective population size trajectory assuming a semi-parametric 
 piecewise-constant function, 
where the log-transformed population sizes are modelled as a deterministic function of log-transformed counts of cases of COVID-19 over 2 week intervals \citep{Skygrid}.
  We model the process of transitioning through discrete location states, which are the countries of sampling in this case, using the polyepoch clock model to capture overall rate variation through time, while assuming the relative rates are log-linear functions of mobility and connectivity covariates following the original study \citep{Lemey2021}.
  Each epoch corresponds to a 2-week interval, same as the epoch specification for the tree prior, and covariates include the social connectedness index (SCI) of Facebook, air transportation data and mobility data. 
  We estimate the inclusion probability of each potential covariate through a spike-and-slab procedure \citep{Lemey2014}. 
  
 The top panel of Figure~\ref{fig:sc2_pcm} shows the posterior median (solid blue line) and $95\%$ BCI (blue shaded area) of the rate of spatial diffusion under the polyepoch clock   model. 
 The rate decreases by almost a factor of one tenth from mid December 2019 to late March 2020. 
 This sharp drop aligns with the lockdown in Spring 2020 in response to the first wave. 
 From April to mid-July 2020, there is a gradual increase in the rate by almost a factor of 9. This aligns with the time period following the containment of the first wave, leading to partial relaxation of restrictions during early summer 2020. 
 With the onset of the second wave, the number of cases began rising in late summer, which subsequently led to another lockdown. 
 We observe a gradual drop in the rate during this second lockdown by almost a factor of one-ninth from mid-July 2020 to October 30, 2020.
 
 The middle panel of Figure~\ref{fig:sc2_pcm} shows the posterior median (solid orange line) and $95\%$ BCI (orange shaded area) of the effective population size as obtained under the skygrid \cite{Skygrid}. 
 We observe a sharp increase in the effective population size, by almost a factor of 250, from December 2019 to early April 2020.
 This is followed by a gradual decrease from early April 2020 to early July 2020, by almost a factor of 2.5, and then an increase in from mid-July 2020 to October 2020, by a factor of about 4. 
 We observe from Figure~\ref{fig:sc2_pcm} that the spatial diffusion rate and the effective population size trajectories exhibit negatively correlated trends.

\begin{figure}
        \centering
        \includegraphics[width=0.75\linewidth]{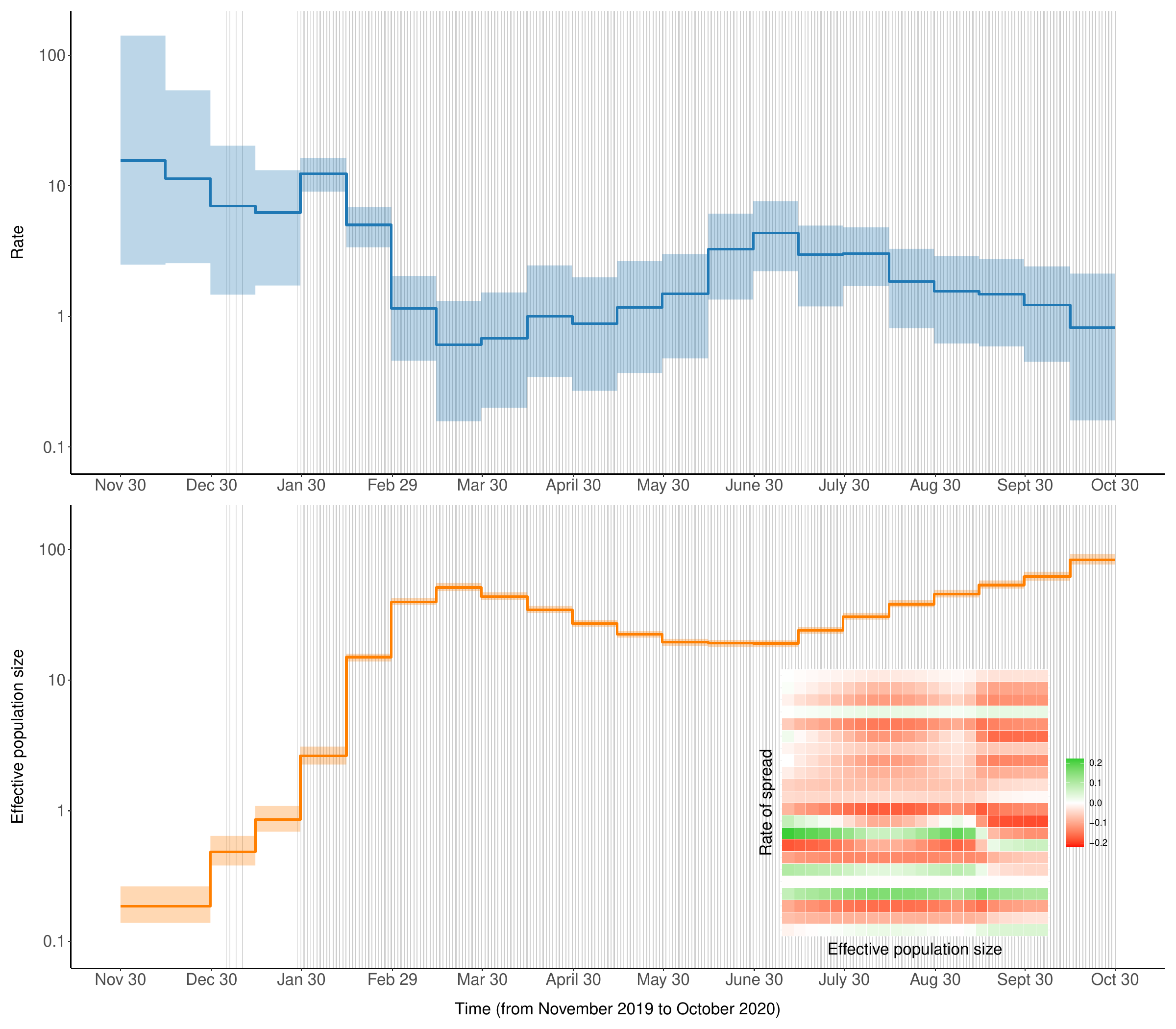} 
    	\caption{Evolutionary dynamics of SARS-CoV-2 in Europe from November 2019 to October 2020 under the polyepoch clock model.
    	The top panel shows the posterior median (solid blue line) and the $95\%$ BCI (blue shaded area) of the rate of spread of SARS-CoV-2 between 10 European countries under the polyepoch clock model.
    	The bottom panel shows the posterior median (solid orange line) and the $95\%$ BCI (orange shaded area) of the effective population size under the skygrid model.
    	The correlation matrix on the bottom right shows the correlation between the rate of spread (along rows) and the effective population size (along columns).
    	Green indicates positive correlation and red indicates negative correlation.
    	Darker shade indicates higher absolute value.
    	Vertical gray lines indicate sampling times.}
    	\label{fig:sc2_pcm}
\end{figure}  

The bottom right panel of Figure~\ref{fig:sc2_pcm} shows the correlation matrix between the posterior samples of the rate of spread under the polyepoch clock model along the rows and the effective population size along the columns.
We indicate positive correlation with green and negative correlation with red, and higher absolute values are indicated with a darker shade. 
We observe that most of the rows are red, suggesting that there is a negative correlation between the rate and the effective population size.
This negative correlation is plausibly explained by multiple introductions of the virus prior to extensive local transmission early in the pandemic.
Once local transmission dominates, there is an increase in the effective population size and  introductions do not contribute meaningfully at this stage.
The absolute value of the correlation decreases as the distance between the epochs increases.

 The inferred MCC tree for this analysis is shown in Figure~\ref{fig:sc2tree}. 
 We observe that the MRCA originated in December 2019. 
 Our analysis captures the spread of the variants B.1.160/20A.EU2 and B.1.177/20E(EU1) (see highlighted clades in Figure~\ref{fig:sc2tree}) in Europe during 2020. Our findings are consistent with \cite{Lemey2021} in identifying the variant B.1.177/20E(EU1) to have originated in Spain, which is then transmitted onwards to rapidly become the most dominant strain in the UK during late summer 2020. 
 We also identify the variant B.1.160/20A.EU which largely spread out of France during summer 2020. 

\begin{figure}
            \centering
            \includegraphics[width=0.75\linewidth]{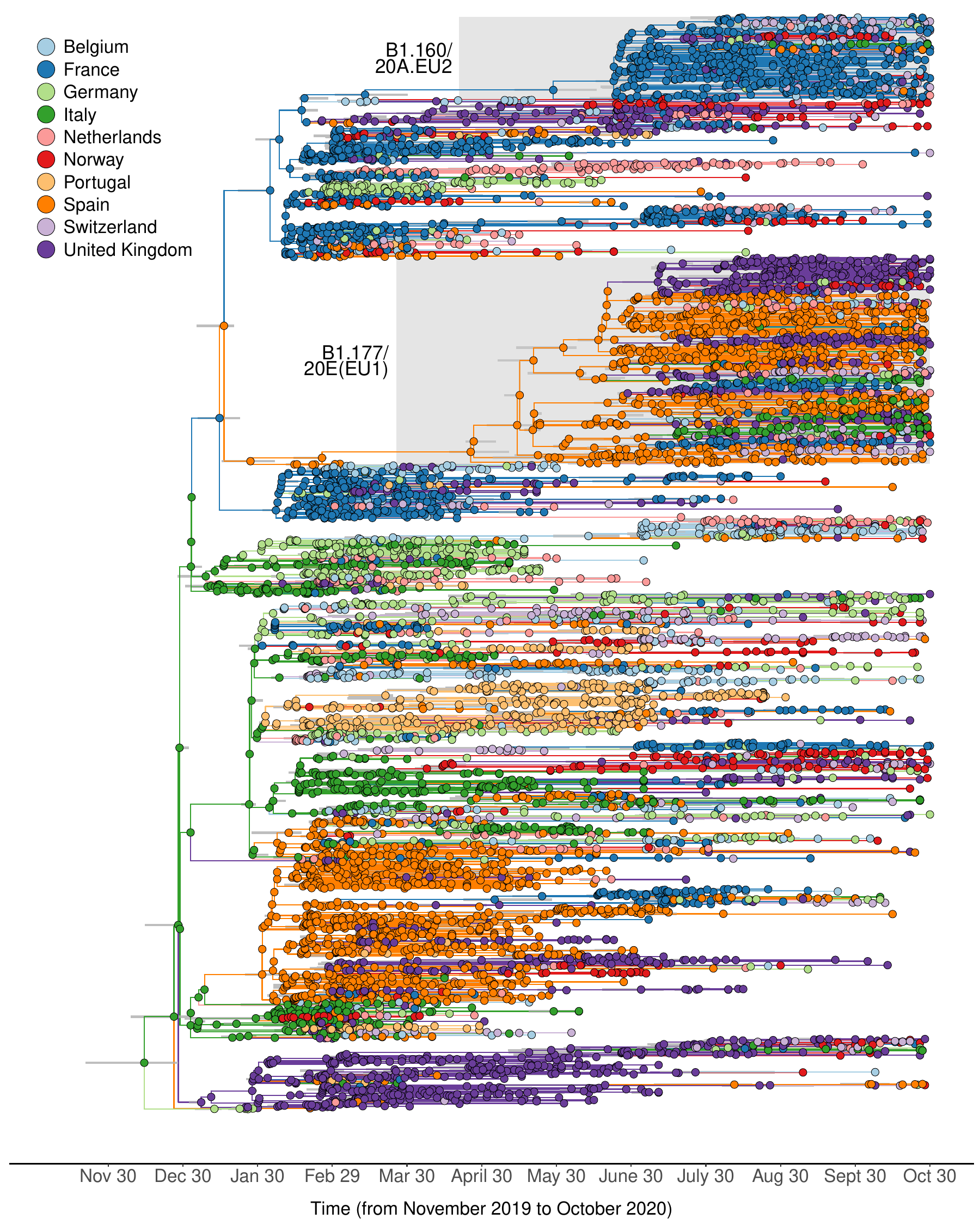} 
        	\caption{MCC tree obtained under the polyepoch clock model for the spread of SARS-CoV-2 in Europe from November 2019 to October 2020. 
        	Nodes and branches are colored according to the countries.}
        	\label{fig:sc2tree}
\end{figure}
  
 \section{Discussion}
 \label{sec:Discussion}
 
 Building on recent insights into the time-varying nature of evolutionary rates, particularly for rapidly evolving viruses \citep{Aiewsakun2016, Duchene2014, aiewsakun2015b, Membrebe2019}, we develop the polyepoch clock   model that is a novel and flexible approach to estimate the change in evolutionary rate through time.
 The polyepoch clock model accommodates the temporally varying nature of the evolutionary rate by modeling molecular sequences to arise from ICTMCs. 
 We model the evolutionary rate as an unknown, positive and integrable function, which leads to closed-form expressions of the transition probability matrix of the ICTMC \citep{Fortmann1977, Kailath1980, Rindos}, expressed in terms of the integral of the function describing the evolutionary rate through time, which needs to be evaluated for all the branches of the phylogeny. 
 We propose the polyepoch clock model as a solution to this problem by modeling the evolutionary rate as a piecewise constant function with a large number of epochs. 
 This retains the flexibility of the estimated rate and makes the computation of the transition probabilities tractable and inexpensive. 
 We use a class of proper GMRF \citep{rueheld} priors to incorporate the temporal dependence between the rate in adjacent epochs, while ensuring that the posterior remains proper.
Given the high dimensionality and strong correlations among rate parameters, we leverage the gradient-based HMC sampler \citep{Neal2011} to efficiently sample from the posterior distribution of the rate parameters. 
To further enhance computational efficiency, we combine a linear-time gradient evaluation algorithm with respect to branch-specific parameters with the chain rule to derive gradients with respect to the rate parameters. 
This approach incurs substantially less computational cost as compared to standard alternatives such as the peeling algorithm \citep{Felsenstein1973, Felsenstein1981} or numerical approximations based on finite differences, making our inference framework scalable to larger datasets.
 
 We assess the performance of the polyepoch clock model in recovering the true evolutionary rate and tMRCA under two different simulation scenarios. 
 Because the rate and tree are estimated jointly, uncertainty in the rate naturally increases near the root, particularly when the data lack calibration points or are sparsely sampled over time. 
 Nevertheless, when the data are informative, the polyepoch clock model accurately recovers the true rate trajectory.
 We then apply the model to four viral examples that exhibit diverse patterns of evolutionary rate variation.
 Our first example, which is the West Nile virus \citep{Pybus2012} in North America from 1999-2007, resembles a relatively constant rate.
 The second example of the Dengue virus \citep{Nunes2014} in Brazil from 1972 - 2010 exhibits significant variation in the evolutionary rate within the sampling time range. 
 We also show that the polyepoch clock model offers a relatively better fit to the Dengue virus data as compared to the random local clock model, which could be a plausible choice given that multiple predominant clades emerge during the time period when most of the variation in the inferred rate is observed under the polyepoch clock model.
 We also apply our model to study the evolutionary dynamics of seasonal influenza A/H3N2, where we assign a logit normal prior to the propriety parameter of the GMRF prior.
 This allows the model to flexibly adapt the degree of smoothing and successfully capture  seasonal peaks in the rate trajectory.
 We further analyze SARS-CoV-2 spread in Europe during January–October 2020 \citep{Lemey2021}. 
 The inferred temporal variation in the spatial diffusion rate aligns well with real-world interventions, showing marked decreases during lockdowns and increases during periods of relaxed restrictions. 
 Moreover, our analysis reveals a negative correlation between the inferred diffusion rate and the effective population size. 
 This motivates us to pursue modeling the log-transformed rate and log-transformed effective population size using bivariate conditionally autoregressive (CAR) models \citep{Gelfand2003} as a potential future direction.

  We implement the polyepoch clock model in BEAST X \citep{Baele2025} providing flexibility in using it with different choices of tree priors \citep{Skygrid}, different substitution processes \citep{Tavare1986, hky1985} and internal and tip node calibration \citep{Rambaut2000, YangRannala2005}. 
  This also allows us to assess the relative fit of the polyepoch clock model compared to the suite of clock models available in BEAST, leveraging the software’s 
  marginal likelihood estimation framework \citep{Baele2012, Baele2016}. 
  We hope that the polyepoch clock   model will enhance the accuracy in estimating the change in evolutionary rate through time for rapidly evolving viruses as well as other biological entities and offer novel insights into long-term evolutionary aspects.

\begin{acks}[Acknowledgments]
This work was supported through National Institutes of Health grants R01 AI153044 and R01 AI162611.
We gratefully acknowledge support from Advanced Micro Devices, Inc.~with the donation of parallel computing resources used for this research.
PL acknowledges support by the Research Foundation - Flanders (‘Fonds voor Wetenschappelijk Onderzoek - Vlaanderen’, G005323N and G051322N).
\end{acks}

\bibliographystyle{imsart-nameyear} 
\bibliography{refs}

\appendix

\section{Integrability of the polyepoch clock model posterior distribution}\label{sec:improperGMRF_appendix}

In this Supplementary Section, we demonstrate that the integral of the polyepoch clock model posterior distribution under the improper Gaussian Markov random field (GMRF) prior is not finite.

\begin{theorem}\label{thm:theorem1}
Let $\mathrm{P}(\boldsymbol{\theta} | \tau)$ denote the prior induced on the polyepoch clock model rate parameters $\boldsymbol{\theta} = (\theta_1, \dots, \theta_{M + 1})'$ by assigning an improper GMRF prior on $\log \boldsymbol{\theta} = (\log \theta_1, \dots, \log \theta_{M + 1})'$. 
Then, the integral of $\mathrm{P}(\boldsymbol{\theta} | \tau)$ over $[L, \infty)^{M + 1}$ is not finite for any $L > 0$.
\end{theorem}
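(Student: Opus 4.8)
The plan is to pass to the logarithmic coordinates $\boldsymbol{\zeta} = \log\boldsymbol{\theta}$, in which the defect of the improper GMRF prior — that it is invariant under translation along the constant vector $\mathbf{1} = (1, \dots, 1)'$ — becomes transparent, and then to exhibit a subset of the integration region of infinite Lebesgue measure on which the integrand stays bounded below by a positive constant.

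First I would record the effect of the diffeomorphism $\theta_m = e^{\zeta_m}$ of $\mathbb{R}^{M+1}$ onto $(0,\infty)^{M+1}$, whose Jacobian is $\prod_{m=1}^{M+1}\theta_m^{-1}$. Since $\mathrm{P}(\boldsymbol{\theta}\mid\tau)$ is by definition the pushforward of the GMRF density of $\boldsymbol{\zeta}$ under this map, the change-of-variables formula yields
\[
\int_{[L,\infty)^{M+1}} \mathrm{P}(\boldsymbol{\theta}\mid\tau)\, d\boldsymbol{\theta} \;=\; \int_{[\log L,\infty)^{M+1}} \mathrm{P}(\boldsymbol{\zeta}\mid\tau)\, d\boldsymbol{\zeta},
\]
and by \eqref{eq:weighted_gmrf1} the right-hand integrand is, up to the positive constant $\tau^{M/2}$, equal to $\exp\!\big[-\tfrac{\tau}{2}\sum_{m=1}^{M}(\zeta_{m+1}-\zeta_m)^2/d_m\big]$; all that the argument uses about the quadratic form $\boldsymbol{\zeta}'(\mathbf{D_w}-\mathbf{W})\boldsymbol{\zeta}$ is that it is a non-negative linear combination of the squared consecutive differences $(\zeta_{m+1}-\zeta_m)^2$. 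Writing $a := \log L$, it thus suffices to prove that $\int_{[a,\infty)^{M+1}} \exp\!\big[-\tfrac{\tau}{2}\sum_{m=1}^{M}(\zeta_{m+1}-\zeta_m)^2/d_m\big]\, d\boldsymbol{\zeta} = \infty$ for every $a \in \mathbb{R}$.

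Next I would change to the unimodular coordinates $\xi := \zeta_1$ and $\eta_m := \zeta_{m+1}-\zeta_m$, $m = 1, \dots, M$ (the Jacobian of this linear map has absolute value $1$), and restrict the integral to the slab $\mathcal{S} := \{\xi \ge a,\ 0 \le \eta_m \le 1\ \text{for all } m\}$. Because the increments are non-negative on $\mathcal{S}$, every original coordinate satisfies $\zeta_k = \xi + \sum_{m < k}\eta_m \ge \xi \ge a$, so $\mathcal{S}$, viewed back in $\boldsymbol{\zeta}$-space, is contained in $[a,\infty)^{M+1}$; moreover on $\mathcal{S}$ the exponent is bounded, $\sum_{m=1}^{M}\eta_m^2/d_m \le \sum_{m=1}^{M}1/d_m$, so the integrand exceeds the strictly positive constant $c := \exp\!\big[-\tfrac{\tau}{2}\sum_{m=1}^{M}1/d_m\big]$. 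Consequently the integral over $\mathcal{S}$ is at least $c\int_a^\infty d\xi \int_{[0,1]^M} d\boldsymbol{\eta} = \infty$, which gives the result.

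There is no genuine obstacle here; the argument runs in a few lines. The only points requiring care are bookkeeping: confirming that the log-transform contributes exactly the Jacobian $\prod_m \theta_m^{-1}$, so that finiteness of the $\boldsymbol{\theta}$-integral over $[L,\infty)^{M+1}$ is equivalent to finiteness of the $\boldsymbol{\zeta}$-integral over $[\log L,\infty)^{M+1}$, and checking that the slab $\mathcal{S}$ really sits inside the orthant — which is exactly where the one-sided bound $\eta_m \ge 0$ is used. An equivalent way to see the divergence is that the GMRF density is constant along the ray $\{\boldsymbol{\zeta}_0 + t\mathbf{1} : t \ge 0\}$, which never leaves the orthant, so any tube around that ray already carries infinite mass; I would present the slab version because it makes the lower bound completely elementary.
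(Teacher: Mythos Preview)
Your argument is correct, and it differs in method from the paper's. After the common first step of passing to $\boldsymbol{\zeta}=\log\boldsymbol{\theta}$, the paper proceeds by induction on the dimension: it integrates out $z_{M+1}$ to obtain a factor $1-\Phi(\sqrt{\tau}(\log L - z_M))\ge 1/2$, yielding $I_{M+1}\ge \tfrac12 I_M$, and anchors the induction by showing $I_2=\infty$ via the same Gaussian tail bound. You instead make a second linear change of variables to the level $\xi=\zeta_1$ and increments $\eta_m=\zeta_{m+1}-\zeta_m$, which directly exposes the null direction $\mathbf{1}$ of the GMRF precision, and then exhibit the infinite-measure slab $\{\xi\ge a,\ \eta_m\in[0,1]\}$ on which the integrand is bounded below by a fixed positive constant. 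Your route is shorter and more conceptual: it makes explicit that the divergence is caused by translation invariance along $\mathbf{1}$, handles all $M$ at once, and never touches $\Phi$; it also accommodates the weights $d_m$ from \eqref{eq:weighted_gmrf1} without modification, whereas the paper's displayed computations drop those weights (harmlessly, but your formulation makes clear why they are irrelevant). The paper's induction, on the other hand, avoids the second coordinate change and stays within one-dimensional Gaussian calculations, which some readers may find more concrete. One small wording point: the Jacobian of the map $\boldsymbol{\zeta}\mapsto\boldsymbol{\theta}=e^{\boldsymbol{\zeta}}$ is $\prod_m\theta_m$, not $\prod_m\theta_m^{-1}$; what you mean is that the density transformation introduces the factor $\prod_m\theta_m^{-1}$, and your integral identity is correct as stated.
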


\begin{proof}
Assigning the improper GMRF prior on $\log \boldsymbol{\theta}$ induces a prior on $\boldsymbol{\theta}$ that corresponds to the following  singular multivariate log-normal distribution:
\begin{equation}\label{eq:rate_improper_prior}
\mathrm{P}(\boldsymbol{\theta} \given \tau ) = \frac{(2 \pi)^{-M/2} \tau^M}{\prod_{m = 1}^{M + 1} \theta_m} \exp \left(-\frac{\tau}{2}  \left(\log \boldsymbol{\theta}\right)' \left(\mathbf{D_w}  - \mathbf{W} \right) \log \boldsymbol{\theta}  \right) ,
\end{equation}
where $\mathbf{D_w}$ and $\mathbf{W}$ are defined in the Bayesian Inference section of the main article. 
We first transform $\theta_m$ to $z_m = \log \theta_m$ for $m = 1, \dots, M + 1$. 
We denote the integral of $\mathrm{P}(\boldsymbol{\theta} \given \tau)$ over $[L, \infty)^{M + 1}$ with $I_{M+1}$, which can be written as 
\begin{eqnarray}\label{eq:proof_right_int}
I_{M + 1} & = & \int_{L}^{\infty} \dots \int_L^{\infty} \mathrm{P}(\boldsymbol{\theta}\given \tau) d \theta_{M + 1} \dots d \theta_{1} \nonumber\\ 
& = & \int_{\log L}^{\infty} \dots \int_{\log L}^{\infty} (2 \pi)^{-M/2} \tau^{M} \exp \left[-\frac{\tau}{2} \mathbf{z}' (\mathbf{D}_w - \mathbf{W}) \mathbf{z}\right] dz_{M + 1} \dots dz_{1},
\end{eqnarray}
where $\mathbf{z} = (z_1, \dots, z_{M + 1})'$. To accomplish this, we first show that $I_2$ is not finite.
Specifically,
\begin{eqnarray}\label{eq:proof_right_int2}
I_2
& = & \int_{\log L}^{\infty} \int_{\log L}^{\infty} \frac{\tau}{\sqrt{2 \pi}} \exp \left[-\frac{\tau}{2} (z_2 - z_1)^2\right] dz_2 dz_1  \nonumber\\
& = & \int_{\log L}^{\infty} 1 - \Phi(\sqrt{\tau}(\log L - z_1)) dz_1\\
& = & \geq \int_{\log L}^{\infty}  \frac{1}{2}  \ d z_1 = \infty.  \nonumber
\end{eqnarray}
We integrate with respect to $z_2$ first and then to get the inequality in the last step, we use the fact that  $\log L \leq z_1 <  \infty \implies 1/2 \leq 1 - \Phi(\sqrt{\tau}(\log L - z_1)) < 1$, where $\Phi(\cdot)$ is the cumulative distribution function of the standard normal distribution. 
Next, we show that if $I_M$ is not finite, then $I_{M + 1}$ is not finite, where $M \geq 2$ is an integer. 
We start by assuming $I_M$ is not finite. 
Let $K_{M}$ be defined as $K_{M} = (2\pi)^{-(M - 1)/2} \tau^{(M - 1)/2}$ . 
We consider the $M + 1$ dimensional integral $I_{M + 1}$ as shown below and integrate with respect to $z_{M + 1}$ first:
\begin{eqnarray}\label{eq:induction_posterior}
 I_{M + 1} 
& = & \int_{\log L}^{\infty} \dots \int_{\log L}^{\infty} K_{M + 1} \exp [-\frac{\tau}{2}\sum_{m = 1}^{M} (z_{m +1} - z_m)^2] dz_{M + 1} \dots dz_1  \nonumber\\
& = & \int_{\log L}^{\infty} \dots \int_{\log L}^{\infty} K_{M + 1} \exp [-\frac{\tau}{2}\sum_{m = 1}^{M-1} (z_{m +1} - z_m)^2]  \nonumber\\
&  & \int_{\log L}^{\infty} \exp [-\frac{\tau}{2} (z_{M + 1} - z_M)^2] dz_{M + 1} dz_M \dots dz_1 \\
& = & \int_{\log L}^{\infty} \dots \int_{\log L}^{\infty} K_{M} \exp [-\frac{\tau}{2}\sum_{m = 1}^{M-1} (z_{m +1} - z_m)^2] \nonumber\\ 
& & (1 - \Phi(\sqrt{\tau}(\log L - z_m))) dz_M \dots dz_1  \nonumber\\
& = & \geq \frac{1}{2}  \int_{\log L}^{\infty} \dots \int_{\log L}^{\infty} K_{M} \exp [-\frac{\tau}{2}\sum_{m = 1}^{M-1} (z_{m +1} - z_m)^2] dz_M \dots dz_1  \nonumber\\
& = & \frac{1}{2} I_M .  \nonumber
\end{eqnarray}
The inequality in the second last step follows from the fact that for $\log L \leq z_M < \infty$, we have $1/2 \leq 1 - \Phi(\sqrt{\tau}(\log L - z_M)) < 1$ and also since the remaining integrand is positive. 
Hence, we get $I_{M + 1} \geq \frac{1}{2}I_M$ and since we assumed $I_M$ is not finite, we have that $I_{M + 1}$ is not finite as well. 
Therefore by mathematical induction, we prove that $I_M$ is not finite for any integer $M \geq 2$. 
Hence, there does not exist any $L > 0$ such that the integral of $\mathrm{P}(\boldsymbol{\theta} | \tau)$ over $[L, \infty)^{M + 1}$ is finite for any $M \geq 1$.
\end{proof}
\begin{theorem}\label{thm:theorem2}
For a known time-calibrated phylogeny $\mathcal{F}$ where the sequences  $\mathbf{Y}^*$ for all nodes are known, let $\mathrm{P}(\boldsymbol{\theta} | \mathbf{Y}^*)$ denote the posterior distribution of $\boldsymbol{\theta}$ given $\mathbf{Y}^*$ under the polyepoch clock model. 
Then, assigning the improper GMRF prior on $\log \boldsymbol{\theta}$ yields a posterior distribution $\mathrm{P}(\boldsymbol{\theta} | \mathbf{Y}^*)$ that is improper.
\end{theorem}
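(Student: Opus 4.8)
The plan is to show that, with the phylogeny and the node sequences held fixed, the likelihood is bounded below by a strictly positive constant on a tail region of the form $[L,\infty)^{M+1}$, and then to combine this with Theorem~\ref{thm:theorem1}, which already establishes that the prior induced on $\boldsymbol{\theta}$ integrates to $+\infty$ over any such region. First I would record the complete-data likelihood. Since every character state is observed at every node of the fixed tree $\mathcal{F}$, the pruning recursion reduces to a product over branches: for each site $c$ the joint probability of the node states equals a root-distribution factor at the root state times $\prod_{i=1}^{2N-2}\bigl[\exp(b_i\mathbf{Q})\bigr]_{Y^*_{\pa{i} c},\,Y^*_{i c}}$, so
\begin{equation*}
\mathrm{P}(\mathbf{Y}^*\given\boldsymbol{\theta},\mathbf{Q},\alpha,\mathcal{F}) \;=\; c_0\prod_{i=1}^{2N-2}\prod_{c=1}^{C}\bigl[\exp(b_i\mathbf{Q})\bigr]_{Y^*_{\pa{i} c},\,Y^*_{i c}},
\end{equation*}
where $c_0>0$ does not depend on $\boldsymbol{\theta}$ and $b_i=b_i(\boldsymbol{\theta})$ is the affine function of $\boldsymbol{\theta}$ from \eqref{eq:pctdm-tpm}; allowing across-site rate variation only multiplies each $b_i$ by a positive site-specific factor and leaves the argument below intact.

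The crux is a lower bound on this likelihood that is uniform over $[L,\infty)^{M+1}$ for $L$ sufficiently large. By \eqref{eq:pctdm-tpm}, $b_i(\boldsymbol{\theta})=\sum_{m=1}^{M+1}a_{im}\theta_m$ with $a_{im}\ge 0$ and $\sum_{m}a_{im}=t_i-t_{\pa{i}}>0$, the calendar-time length of branch $i$; hence on $[L,\infty)^{M+1}$ every $b_i(\boldsymbol{\theta})\ge\delta L$, where $\delta:=\min_{1\le i\le 2N-2}(t_i-t_{\pa{i}})>0$. Assuming $\mathbf{Q}$ is irreducible (as for the standard substitution models used here), each entry $b\mapsto\bigl[\exp(b\mathbf{Q})\bigr]_{st}$ is continuous and strictly positive on $(0,\infty)$ and converges to the stationary probability $\pi_t>0$ as $b\to\infty$, so for any $B>0$ we have $\eta_B:=\min_{s,t}\inf_{b\ge B}\bigl[\exp(b\mathbf{Q})\bigr]_{st}>0$. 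Taking $L:=B/\delta$ forces all $b_i\ge B$ on $[L,\infty)^{M+1}$, whence $\mathrm{P}(\mathbf{Y}^*\given\boldsymbol{\theta},\mathbf{Q},\alpha,\mathcal{F})\ge c_0\,\eta_B^{(2N-2)C}=:\epsilon>0$ there.

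To finish, I would bound the posterior normalizing constant from below by restricting the integral to this tail and invoking Theorem~\ref{thm:theorem1}:
\begin{equation*}
\int_{(0,\infty)^{M+1}}\mathrm{P}(\mathbf{Y}^*\given\boldsymbol{\theta},\mathbf{Q},\alpha,\mathcal{F})\,\mathrm{P}(\boldsymbol{\theta}\given\tau)\,d\boldsymbol{\theta}\;\ge\;\epsilon\int_{[L,\infty)^{M+1}}\mathrm{P}(\boldsymbol{\theta}\given\tau)\,d\boldsymbol{\theta}\;=\;+\infty,
\end{equation*}
so $\mathrm{P}(\boldsymbol{\theta}\given\mathbf{Y}^*)$ is improper. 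The main obstacle is precisely the uniform lower bound of the middle paragraph: it requires pinning down both that driving all $\theta_m\to\infty$ forces every branch length $b_i\to\infty$ (which uses that each branch spans a strictly positive amount of calendar time, i.e.\ $\delta>0$) and that an irreducible finite-state CTMC keeps its transition probabilities bounded away from zero over $[B,\infty)$ because they converge to a strictly positive stationary distribution; the remaining steps are bookkeeping. The one hypothesis worth stating explicitly is irreducibility of $\mathbf{Q}$ together with positivity of the fixed branch durations, both of which hold in standard phylogenetic settings.
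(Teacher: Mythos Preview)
Your proof is correct and follows the same overall strategy as the paper's: show that the complete-data likelihood is bounded below by a strictly positive constant on a tail region $[L,\infty)^{M+1}$, then invoke Theorem~\ref{thm:theorem1}. The difference is in how that lower bound is obtained. The paper spectrally decomposes $\mathbf{Q}=\mathbf{B}\boldsymbol{\Lambda}\mathbf{B}^{-1}$, writes each transition probability as $\pi_{y_i^{(c)*}}+\sum_{l\ge 2} r_{\cdot l}\, e^{\lambda_l b_i(\boldsymbol{\theta})}\,c_{l\cdot}$, expands the product over branches and sites, and argues that all non-constant terms vanish as $\boldsymbol{\theta}\to\boldsymbol{\infty}$ so that the likelihood tends to $\prod_{c}\prod_{i}\pi_{y_i^{(c)*}}>0$. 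You instead use irreducibility directly: $[\exp(b\mathbf{Q})]_{st}$ is strictly positive for $b>0$, converges to $\pi_t>0$, and hence is bounded away from zero on $[B,\infty)$; combined with your explicit lower bound $b_i(\boldsymbol{\theta})\ge\delta L$ on the tail, this yields $\epsilon>0$ uniformly. Your route is slightly more general (it requires only irreducibility, not diagonalizability with real eigenvalues as the spectral argument tacitly does) and is more careful in two places: you make explicit the condition $\delta=\min_i(t_i-t_{\pa{i}})>0$ needed to force every $b_i\to\infty$, and you state the needed \emph{lower} bound on the likelihood rather than the Big-$\mathcal{O}$ upper bound that the paper records. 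Both arguments conclude identically by lower-bounding the normalizing integral by $\epsilon\int_{[L,\infty)^{M+1}}\mathrm{P}(\boldsymbol{\theta}\given\tau)\,d\boldsymbol{\theta}=+\infty$.
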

\begin{proof}
For a known time-calibrated phylogeny $\mathcal{F}$ with $N$ tip nodes, the real time $t_i$ associated with node $i$ is known for $i = 1, \dots, 2N - 1$. 
We assume there are $C$ sites in the sequence alignment. 
At site $c$, 
we denote the character observed for the node $i$ with $y_i^{(c)*}$ and assume that the characters $y_i^{(c)*}$ are known for $i = 1, \dots, 2N - 1$ and $c = 1, \dots, C$. 
The likelihood of the  observed characters $\mathbf{y}^{(c)*}$ for all nodes at site $c$, given the tree $\mathcal{F}$ and rate parameters $\boldsymbol{\theta}$, under the polyepoch clock model with $M + 1$ epochs is
\begin{equation}\label{eq:proof_likelihood}
\mathrm{P}(\mathbf{y}^{(c)^*} | \boldsymbol{\theta}) = \mathrm{P}(y_{2N - 1}^{(c)*}) \prod_{i = 1}^{2N - 2} \mathrm{P}(y_i^{(c)*} | y_{\pa{i}}^{(c)*}, \boldsymbol{\theta}).
\end{equation}
We consider a single branch connecting node $i$ to $\pa{i}$ such that $w_p$ is the largest grid-point less than or equal to $t_i$ and $w_q$ is the smallest grid-point greater than or equal to $t_{\pa{i}}$, as shown in Figure~\ref{fig:post_proof_branch1}. 
We denote the expected number of substitutions along the branch connecting node $i$ to $\pa{i}$ with $b_i(\boldsymbol{\theta})$, where
\begin{equation}\label{eq:branchlength}
b_i(\boldsymbol{\theta}) =  \left( \theta_p (t_i - w_p) + \sum_{m = p} ^ {q - 1} \theta_{m + 1} (w_m - w_{m +1}) + \theta_{q + 1} (w_q - t_{\pa{i}} ) \right).
\end{equation} 
The probability $\mathrm{P}(y_i^{(c)*} | y_{\pa{i}}^{(c)*}, \boldsymbol{\theta})$ is the $(y_{\pa{i}}^{(c)*}, y_i^{(c)*})$-th element of the transition probability matrix 
$\mathbf{P}(t_{\pa{i}}, t_i) = \exp\left[ b_i(\boldsymbol{\theta}) \mathbf{Q}\right]$. Consider the spectral decomposition $\mathbf{Q} = \mathbf{B} \boldsymbol{\Lambda} \mathbf{B}^{-1}$, where $\boldsymbol{\Lambda}$ is the diagonal matrix of eigenvalues $0 = \lambda_1 > \lambda_2 \geq \dots \geq \lambda_S$. 
The columns of $\mathbf{B} = (r_{ik})$ and the rows of $\mathbf{B}^{-1} = (c_{kj})$ are the right and left eigenvectors respectively of $\mathbf{Q}$. 
The right eigenvector corresponding to the largest eigenvalue $0$ is $(1, \dots, 1)'$ and the corresponding left eigenvector is the stationary distribution $(\pi_1, \dots, \pi_s)$. 
Under this spectral decomposition, we can write
\begin{eqnarray}\label{eq:proof_lik_branch}
\mathrm{P}(y_i^{(c)*} | y_{\pa{i}}^{(c)*}, \boldsymbol{\theta}) & = & \sum_{l = 1}^{S} r_{y^{(c)*}_{\pa{i}} l} \exp \left[\lambda_l b_i(\boldsymbol{\theta})\right] c_{l y_i^{(c)*}}  \nonumber\\
& = & \pi_{y^{(c)*}_i} + \sum_{l = 2}^{S} r_{y^{(c)*}_{\pa{i}} l} \exp \left[\lambda_l b_i(\boldsymbol{\theta})\right] c_{l y_i^{(c)*}} ,
\end{eqnarray}
for $i = 1, \dots, 2N - 2$ and $c = 1, \dots, C$. 
Substituting $\mathrm{P}(y_i^{(c)*} | y_{\pa{i}}^{(c)*}, \boldsymbol{\theta})$ in the likelihood of the observed characters at site $c$ \eqref{eq:proof_likelihood}, we have 
\begin{equation}\label{eq:proof_likelihood2}
\mathrm{P}(\mathbf{y}^{(c)^*} | \boldsymbol{\theta}) = \pi_{y^{(c)*}_{2N - 1}} \prod_{i = 1}^{2N - 2} \left(\pi_{y^{(c)*}_i} + \sum_{l = 2}^{s} r_{y_{\pa{i}}^{(c)*}l} \exp \left[\lambda_l b_i(\boldsymbol{\theta})\right] c_{l y_i^{(c)*}}\right).
\end{equation}
The likelihood $\mathrm{P}(\mathbf{Y}^* | \boldsymbol{\theta})$ of the observed sequences for all the nodes is the product of the likelihood of the observed characters for all the nodes at all sites which is given by: 
\begin{eqnarray}
\mathrm{P}( \mathbf{Y}^* | \boldsymbol{\theta}) = \prod_{c = 1}^{C} \left[   \pi_{y^{(c)*}_{2N - 1}} \prod_{i = 1}^{2N - 2}  \left(\pi_{y^{(c)*}_i} +  \sum_{l = 2}^{S} r_{y_{\pa{i}}^{(c)*}l} \exp \left[\lambda_l b_i(\boldsymbol{\theta})\right] c_{l y_i^{(c)}} \right) \right] .
\end{eqnarray}
Using the improper GMRF prior for $\log \boldsymbol{\theta}$ yields the following posterior distribution:
\begin{eqnarray}\label{eq:full_post}
\mathrm{P}( \boldsymbol{\theta} \given \mathbf{Y}^*) & \propto & \mathrm{P}(\mathbf{Y}^* | \boldsymbol{\theta}) \mathrm{P}(\boldsymbol{\theta} | \tau) \nonumber\\
& = & \prod_{c = 1}^{C} \left[ \pi_{y^{(c)*}_{2N - 1}} \prod_{i = 1}^{2N - 2}  \left(\pi_{y^{(c)*}_i} +  \sum_{l = 2}^{s} r_{y_{\pa{i}}^{(c)*}l} \exp \left[\lambda_l b_i(\boldsymbol{\theta})\right] c_{l y_i^{(c)*}} \right) \right] \mathrm{P}(\boldsymbol{\theta} | \tau) .
\end{eqnarray}
Since $\mathrm{P}(\boldsymbol{\theta} | \mathbf{Y}^*)$ is positive, to prove that $\mathrm{P}(\boldsymbol{\theta} | \mathbf{Y}^*)$ is improper, it suffices to show that the integral of $\mathrm{P}(\boldsymbol{\theta} | \mathbf{Y}^*)$ is not finite over $[L, \infty)^{M + 1}$ for some $L > 0$. 
We first note that $\mathrm{P}(\boldsymbol{\theta} | \mathbf{Y}^*)/\mathrm{P}(\boldsymbol{\theta} | \tau)$ in Equation (\ref{eq:full_post}) can be expressed as a sum of products where the only term that is independent of $\boldsymbol{\theta}$ is $\prod_{c = 1}^{C}  \prod_{i = 1}^{2N - 1} \pi_{y^{(c)*}_i}$. 
Every other term in the sum is a product of constants independent of $\boldsymbol{\theta}$ and expressions of the form $\exp\left[\lambda_l b_i(\boldsymbol{\theta})\right]$ for some $l \in \{2, \dots, S\}$ and some $i \in \{1, \dots, 2N - 2\}$, and since $\lambda_l < 0$ for $l = 2, \dots, S$, $\mathrm{P}(\boldsymbol{\theta} \given \mathbf{Y})$ is $\mathcal{O}(\mathrm{P}(\boldsymbol{\theta} \given \tau))$ as $\boldsymbol{\theta} \rightarrow \boldsymbol{\infty}$. 
Therefore, as a consequence of Theorem~\ref{thm:theorem1},  
 the integral of  $\mathrm{P}(\boldsymbol{\theta} | \mathbf{Y}^*)$ \eqref{eq:full_post} over $[L, \infty)^{M + 1}$ is not finite for sufficiently large $L$ and hence $\mathrm{P}(\boldsymbol{\theta} | \mathbf{Y}^*)$ is improper.
\end{proof}
\begin{figure}
\centering
\includegraphics[width=0.8\linewidth]{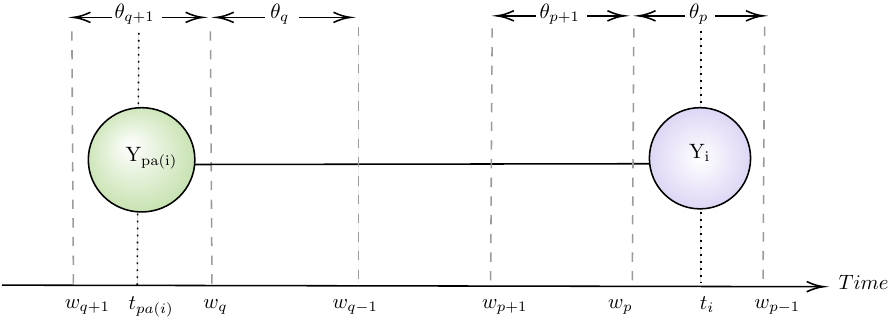} 
\caption{Illustration of a single branch connecting node $i$ with real time $t_i$ to it's parent node $\pa{i}$ with real time $t_{\pa{i}}$ under the polyepoch clock model, where $w_p$ is largest grid-point less than or equal to $t_i$ and $w_q$ is smallest grid-point greater than or equal to $t_{\pa{i}}$}
\label{fig:post_proof_branch1}
\end{figure}
\begin{corollary}
For a known time-calibrated phylogeny $\mathcal{F}$ under the polyepoch clock   model, where the sequences $\mathbf{Y}$ are observed at the tip nodes only, the improper GMRF prior on $\log \boldsymbol{\theta}$ yields an improper posterior distribution $\mathrm{P}(\boldsymbol{\theta} | \mathbf{Y})$.
\end{corollary}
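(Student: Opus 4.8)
The plan is to obtain the corollary as an immediate consequence of Theorem~\ref{thm:theorem2}, using that the tip-only likelihood pointwise dominates a full-node likelihood of exactly the type analyzed there.

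First I would use the pruning algorithm to write the observed-data likelihood as a marginal over the latent internal-node characters. Let $\mathcal{Y}$ be the finite, nonempty set of all character configurations $\mathbf{Y}^*$ on the $2N-1$ nodes across the $C$ sites whose restriction to the tip nodes $1,\dots,N$ equals the observed data $\mathbf{Y}$. Then
\begin{equation}
\mathrm{P}(\mathbf{Y}\given\boldsymbol{\theta}) \;=\; \sum_{\mathbf{Y}^*\in\mathcal{Y}} \mathrm{P}(\mathbf{Y}^*\given\boldsymbol{\theta}),
\end{equation}
where each $\mathrm{P}(\mathbf{Y}^*\given\boldsymbol{\theta})$ is exactly the full-node likelihood \eqref{eq:full_post} (stripped of the prior factor) and is nonnegative. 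Fixing any single $\mathbf{Y}^*_0\in\mathcal{Y}$, nonnegativity of the remaining summands gives the pointwise bound $\mathrm{P}(\mathbf{Y}\given\boldsymbol{\theta})\geq\mathrm{P}(\mathbf{Y}^*_0\given\boldsymbol{\theta})\geq 0$ for every $\boldsymbol{\theta}$.

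Next I would multiply by the induced prior $\mathrm{P}(\boldsymbol{\theta}\given\tau)$ of \eqref{eq:rate_improper_prior} and integrate over $[L,\infty)^{M+1}$, which yields
\begin{equation}
\int_{[L,\infty)^{M+1}} \mathrm{P}(\mathbf{Y}\given\boldsymbol{\theta})\,\mathrm{P}(\boldsymbol{\theta}\given\tau)\,d\boldsymbol{\theta} \;\geq\; \int_{[L,\infty)^{M+1}} \mathrm{P}(\mathbf{Y}^*_0\given\boldsymbol{\theta})\,\mathrm{P}(\boldsymbol{\theta}\given\tau)\,d\boldsymbol{\theta}.
\end{equation}
Since $\mathbf{Y}^*_0$ assigns a character to every node of the known time-calibrated phylogeny, Theorem~\ref{thm:theorem2} applies to it verbatim and shows the right-hand integral is infinite for $L$ sufficiently large (this is exactly the content of the last step of the proof of Theorem~\ref{thm:theorem2}, which reduces to Theorem~\ref{thm:theorem1}). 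Hence the left-hand integral is also infinite, so $\mathrm{P}(\boldsymbol{\theta}\given\mathbf{Y})\propto\mathrm{P}(\mathbf{Y}\given\boldsymbol{\theta})\,\mathrm{P}(\boldsymbol{\theta}\given\tau)$ cannot be normalized and is improper. Equivalently, one could expand $\mathrm{P}(\mathbf{Y}\given\boldsymbol{\theta})$ as the full sum, interchange the finite sum with the integral by Tonelli, and note that every term is infinite by Theorem~\ref{thm:theorem2}.

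There is no substantive analytic obstacle here; all the work is in the reduction, so the step I would be most careful about is the first one --- verifying that the marginalization identity genuinely realizes each $\mathrm{P}(\mathbf{Y}^*_0\given\boldsymbol{\theta})$ as a full-node likelihood of the exact form \eqref{eq:proof_likelihood}--\eqref{eq:full_post}, with the internal-node characters of $\mathbf{Y}^*_0$ playing the role of the ``observed'' characters in Theorem~\ref{thm:theorem2}, the branch-length parameterization \eqref{eq:branchlength} unchanged, and the spectral expansion \eqref{eq:proof_lik_branch} still valid. Once that identification is in place, the inequality and the appeal to Theorem~\ref{thm:theorem2} close the argument.
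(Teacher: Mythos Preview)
Your proposal is correct and takes essentially the same approach as the paper: both expand the tip-only likelihood as a finite sum over internal-node configurations and reduce to Theorem~\ref{thm:theorem2}. The only cosmetic difference is that the paper explicitly argues every summand diverges (your Tonelli alternative), whereas your primary argument bounds below by a single summand; both are valid and the reduction is identical.
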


\begin{proof}
Given the known time-calibrated phylogeny $\mathcal{F}$ and rates $\boldsymbol{\theta}$, the likelihood of the observed characters $\mathbf{y}^{(c)}$ at the tip nodes at site $c$ is the marginal probability of the observed characters at the tip nodes, marginalized over all possible latent states at the internal nodes at that site, which can be written as:
\begin{equation}
\mathrm{P}(\mathbf{y}^{(c)} | \boldsymbol{\theta}) = \sum_{y_{N + 1}^{(c)}} \dots \sum_{y^{(c)}_{2N - 1}} \left( \mathrm{P}(y^{(c)}_{2N - 1}) \prod_{i = 1}^{2N - 2} \mathrm{P}(y_i^{(c)} | y_{\pa{i}}^{(c)}, \boldsymbol{\theta}) \right).
\end{equation}
Consequently, the likelihood of the observed sequences at the tip nodes is the product of the likelihood of the observed characters at the tip nodes for all the $C$ sites as follows:
\begin{equation}\label{eq:cor1}
\mathrm{P}(\mathbf{Y} | \boldsymbol{\theta}) = \prod_{c = 1}^{C} \left[ \sum_{y_{N + 1}^{(c)}} \dots \sum_{y^{(c)}_{2N - 1}} \left( \mathrm{P}(y^{(c)}_{2N - 1}) \prod_{i = 1}^{2N - 2} \mathrm{P}(y_i^{(c)} | y_{\pa{i}}^{(c)}, \boldsymbol{\theta}) \right) \right] .
\end{equation}
The posterior distribution $\mathrm{P}(\boldsymbol{\theta} | \mathbf{Y})$ is
\begin{equation}\label{eq:corollary_proof}
\mathrm{P}(\boldsymbol{\theta} | \mathbf{Y}) =  \prod_{c = 1}^{C} \left[ \sum_{y_{N + 1}^{(c)}} \dots \sum_{y^{(c)}_{2N - 1}} \left( \mathrm{P}(y^{(c)}_{2N - 1}) \prod_{i = 1}^{2N - 2} \mathrm{P}(y_i^{(c)} | y_{\pa{i}}^{(c)}, \boldsymbol{\theta}) \right) \right] \mathrm{P}(\boldsymbol{\theta} | \tau) .
\end{equation}
The posterior above can be written as a sum of products, where each product corresponds to the posterior distribution of $\boldsymbol{\theta}$ given $\mathbf{Y}^*$, for the special case described in Theorem~\ref{thm:theorem2}, where the sequences $\mathbf{Y}^*$ are known for all the nodes of the known phylogeny. 
Hence, using Theorem~\ref{thm:theorem2}, we get that the integral of each product in the sum of products that $\mathrm{P}(\boldsymbol{\theta} | \mathbf{Y})$ in Equation \ref{eq:corollary_proof} breaks down into is infinite and hence $\mathrm{P}(\boldsymbol{\theta} | \mathbf{Y})$ \eqref{eq:corollary_proof}  is also improper when the sequences are observed for tip nodes only for a known phylogeny.
\end{proof}

\section{Sensitivity Analysis}\label{sec:sensitivity_appendix}
	
We assess the sensitivity of the results under the polyepoch clock model for different choices of the prior assigned to the GMRF precision $\tau$. 
The GMRF precision controls the smoothness of the estimated rate trajectory. 
Since, we do not know anything apriori about the smoothness, we choose a relatively uninformative prior for the precison. 
We fit the polyepoch clock model with 100 epochs to the simulated data with log linear evolutionary rate as described in Section 3.1 of the main article using the Gamma prior for the GMRF precision $\tau$. 
We fix the scale parameter to $1000$ and choose the shape parameter values $0.001$, $0.01$ and $0.1$. 
The resulting prior means are $1$, $10$ and $100$ and the prior resulting variances are $10^3$, $10^4$ and $10^5$ respectively. 
Figure~\ref{fig:sensitivity_analysis} shows that posterior boxplot of the log transformed GMRF precision, $\log \tau$ for the three different priors on the left. 
We observe that the posterior distribution of $\log \tau$ does not change significantly for the different choices of the prior. 
The plot on the right in Figure~\ref{fig:sensitivity_analysis} shows the posterior histogram (in blue) of $\log \tau$ against the flat and uninformative prior density (black dashed line), which suggests that the data is sufficiently informative to estimate the smoothness of the evolutionary rate trajectory.
	
\begin{figure}
\centering
\includegraphics[width=0.9\linewidth]{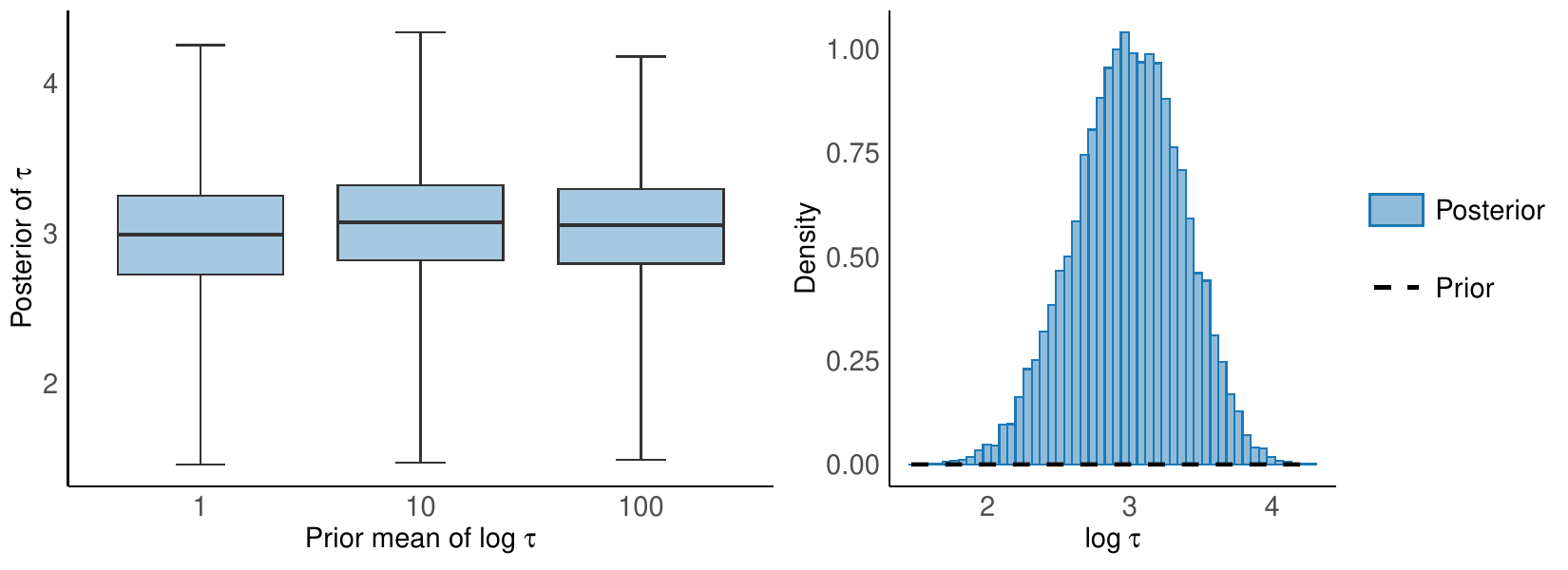}
\caption{The plot on the left shows the posterior boxplot of the log transformed GMRF precision $\tau$ for different Gamma priors with mean $1, 10$ and $100$ corresponding to a fixed scale $ = 1000$ and shape $0.001$, $0.01$ and $0.1$ from left to right. 
The plot on the right shows the posterior histogram and prior density of $\log \tau$ corresponding to Gamma$(0.001, 1000)$}.
\label{fig:sensitivity_analysis}
\end{figure}
    
 \section{Lassa Virus}\label{sec:Lassa_appendix}
 
 Lassa virus is a single-stranded RNA virus that can cause hemorrhagic fever with case fatality rates over $50\%$ and numerous sub-clinical infections among hospitalized Lassa Fever patients \citep{Andersen2015}. 
 Lassa fever is endemic to West Africa, causing tens of thousands hospitalizations and several thousand deaths each year. 
 The rodent Mastomys natalensis is the primary reservoir for the virus. 
 We analyze the S segment of the LASV genome dataset of \cite{Andersen2015} that consists of 211 samples obtained at clinics in both Sierra Leone and Nigeria, rodents in the field, laboratory isolates and previously sequenced genomes from 1969 to 2013.  
 We fit the polyepoch clock   model with 100 epochs of width 1 year between 1913 and 2013, 40 epochs of width 5 years between 1913 and 1713, and 20 epochs of width 50 years from 1713 to 713. 
 We choose denser grid-points during the recent years where all sampling times are concentrated and where we expect to capture a finer scale variation in the rate.
 Since all the samples are collected within 40 years from the most recent sampling time in 2013 and the TMRCA is known to be more than a thousand years \citep{Andersen2015}, there is not enough information available in the data to capture more precise variation in the rate as we move closer to the root.
 Therefore, we chose wider epochs beyond the time frame where the data are informative.
 The bottom left panel of Figure~\ref{fig:lassa} shows the posterior median (solid blue line) along with the $95\%$ BCI (blue shaded area) of the evolutionary rate from 713 to 2013. Due to the high uncertainty of the estimates from 613 to 1969, we cannot infer precisely whether there is any variation in the rate during this time period. 
 However, we observe that from 1969 to 2013 (dashed box), when all the samples were collected, there is significant variation in the evolutionary rate, which is magnified and shown in the top left panel of Figure~\ref{fig:lassa}. 
 The estimated rate increases by almost a factor of 5 from 1985 to 2006 and then drops by almost a factor of 4 from 2006 to 2013. 
 The top right panel of Figure~\ref{fig:lassa} shows the maximum clade credibility tree of the Lassa virus. 
 We identify the four major lineages of the Lassa virus, three in Nigeria (lineages I, II and III) and the fourth in Sierra Leone, Guinea and Liberia (lineage IV), which aligns with previous studies \citep{Bowen2000, Andersen2015}. 
 Our findings align with previous studies \citep{Andersen2015,  Ji2019} in estimating that the Lassa virus originated in modern-day Nigeria more than a thousand years ago and spread into the neighboring West African countries in the last few hundred years.
\begin{figure}
     \centering
   	\includegraphics[width=0.8\linewidth]{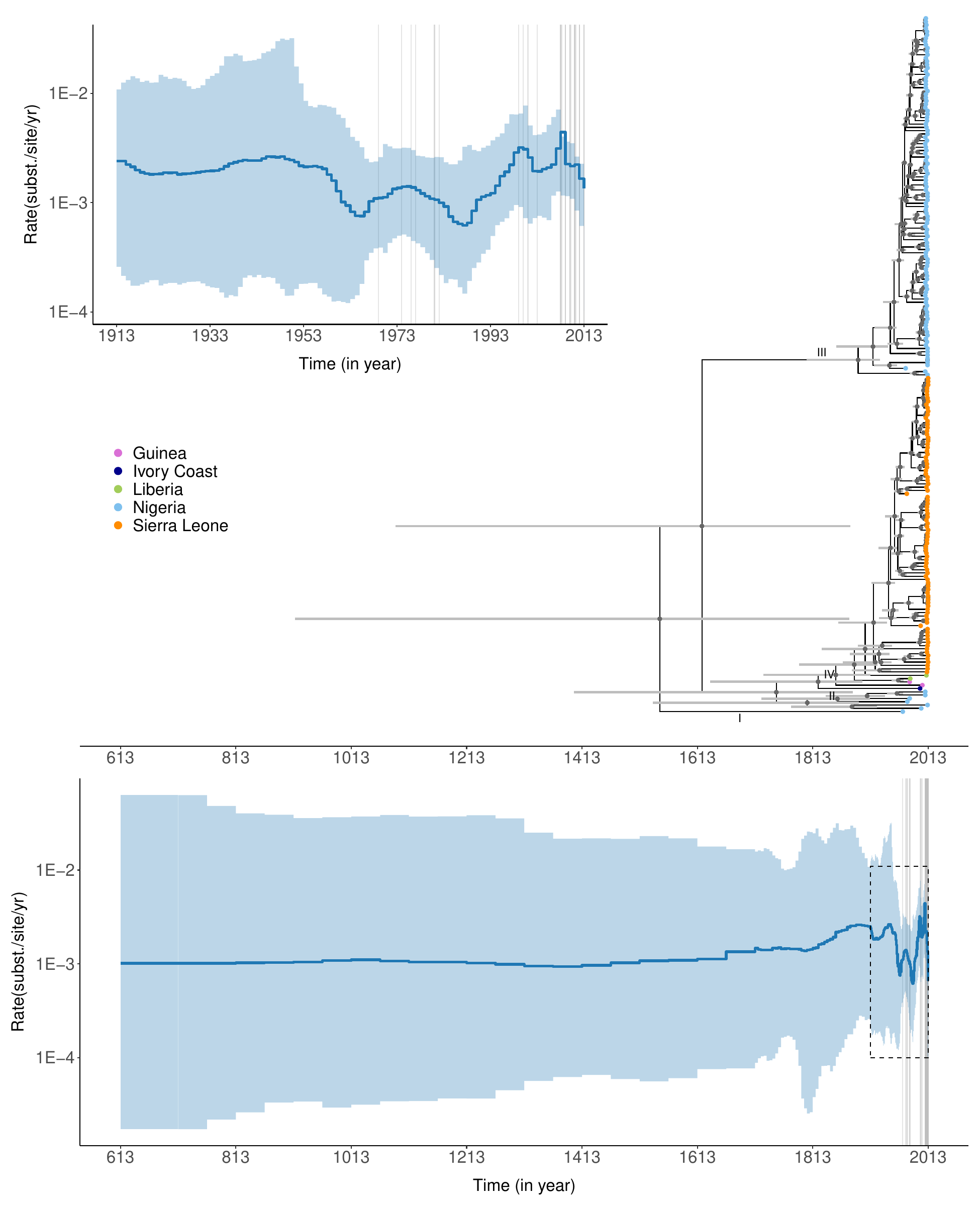} 
  	\caption{Evolutionary dynamics of the Lassa virus in West Africa from 13 to 2013 under the polyepoch clock model.
  	The top panel  shows the MCC tree, where the tips are colored according to geographical location and the posterior median and $95\%$ BCI of internal node dates are depicted with gray circles and gray shaded bars respectively.
  	The bottom panel shows the posterior median (solid blue line) and the $95\%$ BCI (blue shaded area) of the evolutionary rate. 
  	The vertical gray lines represent sampling times. 
  	The black dashed box showing the posterior median and $95\%$ BCI of the evolutionary rate from 1913 to 2013 is magnified in the top left panel.}
  	\label{fig:lassa}
\end{figure}



\end{document}